\documentclass{llncs}
\usepackage{fullpage}

\usepackage[table]{xcolor}
\usepackage{xspace}
\usepackage{array}
\usepackage{subfigure}
\usepackage{xcolor}
\usepackage{epsfig}
\usepackage{gensymb}
\usepackage[english]{babel}
\usepackage{graphicx}
\usepackage{amssymb}
\usepackage{multirow}
\usepackage{todonotes}
\usepackage{enumerate}
\usepackage{paralist}
\usepackage{amsmath}
\usepackage{cases}
\usepackage{siunitx}

\newcommand{\remove}[1]{}

\renewcommand{\int}{int}

\newcommand{\generalHeight}{$O(n^{0.576})$\xspace}
\newcommand{\generalArea}{$O(n^{1.576})$\xspace}

\newcommand{\completeArea}{$O(n^{1.149})$\xspace}

\renewenvironment{proof}
{{\bf Proof:}}{\hspace*{\fill}$\Box$\par\vspace{2mm}}

\newtheorem{open}{Open Problem}

\begin{document}
\title{On the Area Requirements of Planar Straight-Line\\Orthogonal Drawings of Ternary Trees\thanks{Partially supported by MIUR Project ``MODE'' under PRIN 20157EFM5C and by H2020-MSCA-RISE project 734922 – ``CONNECT''. This paper combines the results contained in the conference papers~\cite{cfp-arslodtt-18} and~\cite{f-so-07}.}}
\author{Barbara Covella, Fabrizio Frati, Maurizio Patrignani}
\institute{
Roma Tre University, Italy -- \email{\{covella,frati,patrigna\}@dia.uniroma3.it}\\
}
\maketitle

\begin{abstract}
In this paper, we study the area requirements of planar straight-line orthogonal drawings of ternary trees. We prove that every ternary tree admits such a drawing in sub-quadratic area.  

Further, we present upper bounds, the outcomes of an experimental evaluation, and a conjecture on the area requirements of planar straight-line orthogonal drawings of complete ternary trees.

Finally, we present a polynomial lower bound on the length of the minimum side of any planar straight-line orthogonal drawing of a complete ternary tree.

\end{abstract}
{\em Keywords:} Graph drawing; ternary tree; planar straight-line orthogonal drawing; area requirements.

\section{Introduction}\label{le:intro}
A {\em planar straight-line orthogonal} drawing of a graph represents each vertex as a point in the plane and each edge either as a horizontal or as a vertical straight-line segment, so that no two edges cross; see Fig.~\ref{fig:example}.

\begin{figure}[htb]
	\begin{center}
		\mbox{\includegraphics[scale=1.2]{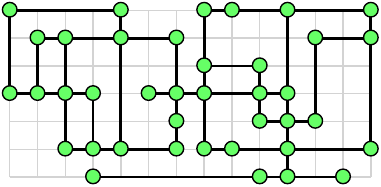}}
		\caption{A planar straight-line orthogonal drawing of a graph in which vertices are placed at grid points.}
		\label{fig:example}
	\end{center}
\end{figure}
Planar straight-line orthogonal graph drawings have long been studied. In 1987, Tamassia~\cite{t-egg-87} presented an algorithm that decides in polynomial time whether a graph with a fixed combinatorial embedding has a planar straight-line orthogonal drawing (and, more in general, a planar orthogonal drawing with at most $k$ bends, for any integer $k\geq 0$); this result lies at the very foundations of the research area now called Graph Drawing. On the other hand, Garg and Tamassia~\cite{gt-ccurpt-01} proved that deciding whether a graph admits a planar straight-line orthogonal drawing is NP-hard in the variable-embedding setting. Nomura et al.~\cite{ntu-odog-05} proved that every {\em outerplanar graph} with maximum degree $3$ and no $3$-cycle has a planar straight-line orthogonal drawing. 

The question whether a given {\em tree} has a planar straight-line orthogonal drawing is less interesting, as the answer is positive if and only if the degree of each node is at most $4$. Most research efforts concerning planar straight-line orthogonal drawings of trees have then been devoted to the construction of drawings with {\em small area}. This is usually formalized by requiring nodes to lie at {\em grid} points, i.e., points with integer coordinates, and by defining the {\em width} and {\em height} of a drawing as the number of grid columns and rows intersecting it, respectively, and the {\em area} as the width times the height. 

We introduce some definitions; see also~\cite{dett-gd-99,df-dt-13,r-tda-16}. A {\em rooted} tree $T$ is a tree with one distinguished node, which is called the {\em root} of $T$ and is denoted by $r(T)$. As usual in the literature about tree drawings, we will always assume trees to be rooted, even when not explicitly mentioned. Trees of maximum degree $3$ and $4$ are also called {\em binary} and {\em ternary} trees, respectively. For every node $s\neq r(T)$ in a tree $T$, the neighbor of $s$ in the path between $s$ and $r(T)$ is the {\em parent} of $s$; all the other neighbors of $s$ are its {\em children}. In a binary tree each node has at most $2$ children, while in a ternary tree each node has at most $3$ children; note that this requires the root to have degree at most $2$ or $3$, respectively. A {\em leaf} is a node with no children. For any non-leaf node $s$ of a tree $T$, removing $s$ and its incident edges from $T$ splits $T$ into several connected components; the ones containing children of $s$ are the {\em subtrees of $s$}; the subtrees of $s$ are rooted at the children of $s$. A {\em subtree} in a tree $T$ is either the tree $T$ itself, or a subtree of some non-leaf node $s$ in $T$. A tree is {\em complete} if every non-leaf node has the same number of children and every root-to-leaf path has the same length. We denote by $T_h$ the complete ternary tree such that every root-to-leaf path has length $h$, i.e., it consists of $h$ nodes. We denote by $|T|$ the number of nodes of a tree $T$.

It has been known since the 70's that $n$-node complete binary trees admit planar straight-line orthogonal drawings in $O(n)$ area~\cite{cdp-no-92,s-lpag-76}. Concerning general binary trees, $O(n\log \log n)$ has long stood as the best known area bound~\cite{cgkt-oa-02,skc-ae-00}; however, a recent breakthrough result of Chan~\cite{c-tdr-18} has improved that to $n2^{O(\log^* n)}$, where $\log^*$ denotes the iterated logarithm. 

In this paper we prove the first sub-quadratic area bound for planar straight-line orthogonal drawings of ternary trees. In fact, our main result is that every $n$-node ternary tree admits a planar straight-line orthogonal drawing in \generalArea area. In Section~\ref{se:general}, we present a recursive geometric construction, together with the (non-trivial) analysis of its area requirements, from which our main result follows. 

In Section~\ref{se:complete}, we study the area requirements of planar straight-line orthogonal drawings of $n$-node complete ternary trees. A recent result of Ali~\cite{a-so-15} proved that such drawings can be constructed in $O(n^{1.118})$ area. We focus on drawings that satisfy the {\em subtree separation property}: the smallest axis-parallel rectangles enclosing the drawings of any two node-disjoint subtrees do not overlap. We prove that $n$-node complete ternary trees have planar straight-line orthogonal drawings with the subtree separation property in $O(n^{1.149})$ area. We also present an algorithm that constructs a minimum-area planar straight-line orthogonal drawing with the subtree separation property of a complete ternary tree in polynomial time. This allowed us to experimentally compute the area required by planar straight-line orthogonal drawings with the subtree separation property for complete ternary trees with up to $2$ billion nodes. The outcomes of these experiments led us to conjecture that complete ternary trees do not admit planar straight-line orthogonal drawings with the subtree separation property in near-linear area.

Finally, in Section~\ref{se:lower-bound}, we prove that any planar straight-line orthogonal drawing of a complete ternary tree with $n$ nodes requires $\Omega(n^{\log_3 \phi})\in \Omega(n^{0.438})$ height {\em and} width, where $\phi=(1+\sqrt 5)/2$ is the golden ratio. This marks a notable difference between binary trees and ternary trees; in fact for the former planar straight-line orthogonal drawings can be constructed in which one side has logarithmic length, while for the latter our result proves that polynomial length might be required for both side lengths. Our lower bound holds true even for drawings that do not satisfy the subtree separation property. For $n$-node complete ternary trees, we prove that our bound is tight: they admit planar straight-line orthogonal drawings in which one side has length in $O(n^{\log_3 \phi})$.

Section~\ref{se:conclusions} concludes the paper with some open problems. In the remainder of the paper, a {\em drawing} will always mean a planar straight-line orthogonal drawing in which nodes lie at grid points.

\section{General Ternary Trees} \label{se:general}

In this section we prove the following theorem.

\begin{theorem} \label{th:general}
	Every $n$-node ternary tree admits a planar straight-line orthogonal drawing with width $O(n)$, with height \generalHeight, and with area \generalArea.
\end{theorem}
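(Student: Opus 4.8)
The plan is to give a recursive drawing algorithm. Its basic object is a drawing of a rooted ternary subtree inside an axis-parallel box with the root on a prescribed side of the box; by reflecting and rotating, one obtains the four variants ``root on the left/right/bottom/top side'', and the recursion invokes these variants on the subtrees of the current root. The structural fact that drives the whole argument is that a node with three children has its parent edge and its three child edges occupying all four axis directions, so no direction is left free; this absence of slack is precisely what distinguishes ternary trees from binary trees, for which every node leaves a direction unused and near-linear area is known to be achievable~\cite{c-tdr-18}.

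To obtain linear width together with sublinear height I would spread the mass of the tree along a horizontal path. Starting from the root, repeatedly descend to the child whose subtree has the most vertices, obtaining a path $\pi=u_0,u_1,\dots,u_\ell$, and draw $\pi$ monotonically along a horizontal line. Each internal $u_i$ has at most two children off $\pi$; draw their subtrees recursively, rotate them so that their roots lie on the bottom, resp.\ the top, of their boxes, and attach one above $u_i$ and one below $u_i$ by a single vertical edge, widening the gap between consecutive path nodes just enough to accommodate what hangs there. Since the off-path subtrees are pairwise vertex-disjoint and each of their recursive drawings has width linear in its number of vertices, the total width is $O(n)$. Writing $H(k)$ for the height our algorithm uses on a $k$-vertex tree, the height obeys a recurrence of the form $H(n)\le c+H(M_{\mathrm{up}})+H(M_{\mathrm{down}})$, where $M_{\mathrm{up}}$ (resp.\ $M_{\mathrm{down}}$) is the largest number of vertices of a subtree attached above (resp.\ below) $\pi$, and where, independently at each $u_i$, we may decide which of its two off-path subtrees goes up and which goes down.

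The heart of the argument --- and the step I expect to be the main obstacle --- is to show that this recurrence solves to \generalHeight. The ``largest-child'' rule imposes strong size constraints: if an off-path subtree at $u_i$ contains nearly half of the vertices still available at $u_i$, then the subtree rooted at $u_{i+1}$, and hence everything hanging farther along $\pi$, shrinks accordingly, and the two subtrees hanging at a common node cannot both be large. Combining these constraints with a greedy up/down assignment already gives a bound of the shape $O(n^{\beta})$, but with $\beta=1$ from the crudest estimate, and only $\beta=\log_2\phi\approx 0.694$ after one round of balancing, the recursion $n\to\{n/2,n/4\}$ being the bottleneck. Pushing $\beta$ down to the claimed value requires a substantially more careful construction --- distinguishing cases according to how balanced the subtrees of the current root are, treating the ``balanced'' regime by a more compact sublayout rather than a perpendicular hang, and possibly replacing the single heavy path by a cleverer path (or a decomposition of the tree into several paths) --- together with a delicate worst-case analysis of the resulting recursion, whose numerical solution yields the exponent. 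Granting the $O(n)$ width bound and the \generalHeight height bound, the area bound \generalArea follows by multiplication; the finitely many constant-size base cases and the reflections and rotations realizing the four ``root on a side'' variants are routine.
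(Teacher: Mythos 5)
Your proposal identifies the right general direction --- hang off-path subtrees perpendicularly above and below a heavy path drawn on a horizontal line, charge the width to the vertex-disjointness of the hanging subtrees to get $O(n)$ width, and control the height by a recurrence --- and this is indeed the skeleton of the paper's construction. But the heart of the theorem is precisely the part you leave unspecified. As you yourself concede, the single-row heavy-path layout with recurrence $H(n)\le c+H(M_{\mathrm{up}})+H(M_{\mathrm{down}})$ does not reach the exponent $0.576$; your plan to fix this is only gestured at (``a substantially more careful construction \dots possibly replacing the single heavy path by a cleverer path \dots whose numerical solution yields the exponent''), with no concrete construction and no recurrence that can be checked. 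That is a genuine gap: without it you have proved an area bound, but not the one claimed.

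Concretely, what is missing is the paper's two-level mechanism. Fix a parameter $p$ and let $\pi_x$ be the \emph{first} heavy-path node having two subtrees of size at least $n/p$. The prefix $\pi_1,\dots,\pi_{x-1}$ of the heavy path (extended at both ends by heavy paths of second-heaviest subtrees, yielding a path $P$) is drawn on an upper horizontal line, and the suffix from $\pi_x$ onward (extended by a heavy path of $M_{\pi_x}$, yielding a path $Q$) on a lower line; the two large subtrees at $\pi_x$ are pushed horizontally clear of $P$ and its attached subtrees, which permits compacting the two lines so that $Q$ sits one unit below everything hanging from $P$. The height then satisfies $\eta(n)\le\max\{\eta(r)+\eta(s)+1,\ \eta(a)+\eta(b)+\eta(s)+2\}$, where by the choice of $x$ the subtrees hanging from $P$ obey $a,b<n/p$, and where a case analysis over where the extremal top and bottom subtrees of $Q$ sit yields the key inequality $r+s\le\frac{2(p-1)}{3p}n$, together with $s\le (n-a-b)/3$. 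Solving this recurrence (the $r,s$ branch via H\"older's inequality, the $a,b,s$ branch via a monotonicity argument and a numerical choice $p=9.956$) gives $\eta(n)\le 2n^{c}-1$ with $c=1/\log_2\frac{3p}{p-1}\approx 0.576$, i.e.\ height \generalHeight and area \generalArea. None of these ingredients --- the distinguished index $x$, the two-row drawing with its compaction, the inequality $r+s\le\frac{2(p-1)}{3p}n$, or the analytic solution of the resulting recurrence --- appears in your proposal, so the claimed bounds are not established by it.
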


In the following we show an inductive algorithm that takes in input an $n$-node ternary tree $T$ and constructs a drawing $\Gamma$ of $T$ that satisfies the {\em top-visibility property}, i.e., the vertical half-line emanating from the root $r(T)$ and directed upwards does not intersect $\Gamma$, except at $r(T)$.

For a node $v$ in $T$, we denote as $T_v$ the subtree of $T$ rooted at $v$; further, we denote the subtrees of $v$ as the {\em heaviest} subtree $H_v$, the {\em second heaviest} subtree $M_v$, and the {\em lightest} subtree $L_v$ of $v$, according to the non-increasing order of the number of their nodes, with ties broken arbitrarily. A {\em heavy path} in $T$ is a path $(v_1,\dots,v_k)$ such that $r(T)=v_1$, such that $v_{i+1}$ is the root of $H_{v_i}$, for $i=1,\dots,k-1$, and such that $v_k$ is a leaf. We denote by $(\pi_1,\dots,\pi_{k(\pi)})$ the nodes of a heavy path $\pi$.

The base case of the construction is the one in which $n=1$. Then $\Gamma$ is trivially constructed by placing $r(T)$ at any grid point of the plane.

If $n>1$, then let $\pi$ be a heavy path in $T$. Further, let $\rho$ be a heavy path in $M_{r(T)}$. Let $p>4$ be a parameter to be determined later and let $x$ be the smallest index such that $\pi_x$ has at least two subtrees with at least $n/p$ nodes each, if any such an index exists. We first describe our construction by assuming that $x$ exists and is greater than $2$; we will deal with the other cases later. Let $\sigma$ be a heavy path in $M_{\pi_{x-1}}$ and let $\tau$ be a heavy path in $M_{\pi_x}$. Finally, let $P=(\rho_{k(\rho)},\dots,\rho_1,\pi_1,\dots,\pi_{x-1},\sigma_1,\dots,\sigma_{k(\sigma)})$ and  $Q=(\pi_{k(\pi)},\dots,\pi_x,\tau_1,\dots,\tau_{k(\tau)})$.

\begin{figure}[tb]
	\begin{center}
		\mbox{\includegraphics[width=.99\textwidth]{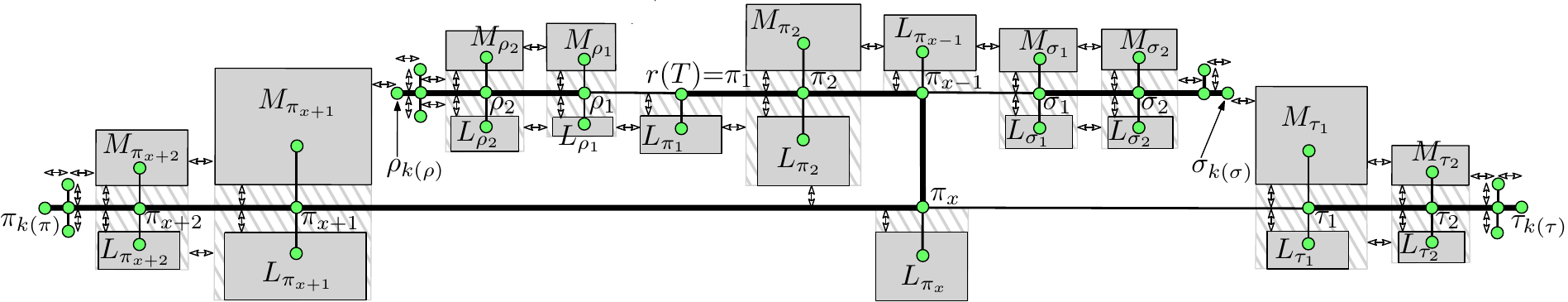}}
		\caption{Construction of $\Gamma$ if $n>1$. Fat lines represent $\pi$, $\rho$, $\sigma$, and $\tau$. Double-headed arrows indicate unit distances. Gray boxes represent inductively constructed drawings.}
		\label{fig:inductive-construction}
	\end{center}
\end{figure}

Fig.~\ref{fig:inductive-construction} shows the construction of $\Gamma$. The paths $P$ and $Q$ are drawn on two horizontal lines $\ell_P$ and $\ell_Q$, respectively, with $\ell_P$ above $\ell_Q$ and with the nodes in left-to-right order as they appear in $P$ and $Q$. Let ${\cal V}=V(P) \cup V(Q)$. For every subtree $T^*$ of $T$ rooted at a node $r^*\notin {\cal V}$ that is child of a node in ${\cal V}$, a drawing $\Gamma^*$ of $T^*$ is constructed inductively. Then $\Gamma^*$ is attached to the parent $p^*$ of $r^*$ as follows (note that $p^*\in{\cal V}$). If $T^*$ is the lightest subtree of $p^*$, then $\Gamma^*$ is placed with $r^*$ on the same vertical line as $p^*$ and with its top side one unit below $p^*$ (we call $T^*$ a {\em bottom subtree} of $P$ or $Q$, depending on whether $p^*$ is in $V(P)$ or $V(Q)$, respectively). Otherwise, $T^*$ is the second heaviest subtree of $p^*$ (note that $T^*$ is not the heaviest subtree of $p^*$, as $r^*\notin {\cal V}$); then $\Gamma^*$ is  rotated by $180\degree$ and placed with $r^*$ on the same vertical line as $p^*$ and with its bottom side one unit above $p^*$ (we call $T^*$ a {\em top subtree} of $P$ or $Q$, depending on whether $p^*$ is in $V(P)$ or $V(Q)$, respectively). There is one exception to this rule, which happens if  $T^*$ is the lightest subtree of $p^*=\pi_{x-1}$; then $\Gamma^*$ is rotated by $180\degree$ and placed with $r^*$ on the same vertical line as $p^*$ and with its bottom side one unit above $p^*$, as if it were a second heaviest subtree (then $T^*$ is a {\em top subtree} of $P$).

The horizontal distance of the nodes in $\cal V$ is determined so to ensure that, for any two nodes $x$ and $y$ such that $x$ comes immediately before $y$ on $P$ or $Q$, the rightmost vertical line intersecting $x$ or its attached subtrees is one unit to the left of the leftmost vertical line intersecting $y$ or its attached subtrees. There are two exceptions to this rule, involving the distance between $\pi_x$ and its children $\pi_{x+1}$ and $\tau_1$. Indeed, the distance between $\pi_x$ and $\pi_{x+1}$ is determined so that the rightmost vertical line intersecting $\pi_{x+1}$ or its attached subtrees is one unit to the left of the leftmost vertical line intersecting $P$, or its attached subtrees, or $\pi_x$, or its attached subtree $L_{\pi_x}$; the distance between $\pi_x$ and $\tau_1$ is determined similarly. The reason for ``pushing'' $\pi_{x+1}$ (resp.\ $\tau_1$) and its attached subtrees to the left (resp.\ to the right) of $P$ and its attached subtrees is to allow for a vertical compaction of $\Gamma$. In fact, the vertical distance between $P$ and $Q$ can now be chosen so that the bottommost horizontal line intersecting $P$ or its attached subtrees is one unit above $Q$. This completes the construction of $\Gamma$.

It is easy to see that $\Gamma$ is a planar straight-line orthogonal drawing satisfying the top-visibility property. Further, every grid column that intersects $\Gamma$ contains at least one node of $T$, hence the width of $\Gamma$ is in $O(n)$. We now analyze the height of $\Gamma$. Denote by $\eta(n)$ the maximum height of a drawing of a ternary tree with $n$ nodes constructed by the described algorithm. Then the height of $\Gamma$ is at most $\eta(n)$. Note that $\eta(1)=1$.

Let $a$ (resp.\ $b$) be the maximum number of nodes of a top (resp.\ bottom) subtree of $P$. Let $r$ (resp.\ $s$) be the maximum number of nodes of a top (resp.\ bottom) subtree of $Q$. By the definition of the index $x$ and since $|M_{\pi_i}|,|L_{\pi_i}|<|H_{\pi_i}|$, we have $|M_{\pi_i}|,|L_{\pi_i}|< n/p$, for any $i \in \{1,\dots,x-1\}$. Hence:
\begin{eqnarray} \label{eq:na-nb}
a,b< n/p.
\end{eqnarray} 

Since the lightest subtree of the root of a tree with $m$ nodes has less than $m/3$ nodes, we have
\begin{eqnarray} \label{eq:nd}
s\leq (n-a-b)/3.
\end{eqnarray} 

We also have the following inequality, whose proof needs some case analysis.
\begin{eqnarray} \label{eq:nc+nd}
r+s\leq \frac{2(p-1)}{3p}n.
\end{eqnarray} 

\begin{proof}
Let $R$ and $S$ be a top and bottom subtree of $Q$, respectively, with $|R|=r$ and $|S|=s$. By construction $R = M_{\pi_i}$, for some $x<i<k(\pi)$, or $R = M_{\tau_i}$, for some $1\leq i < k(\tau)$. Further, $S = L_{\pi_j}$, for some $x\leq j<k(\pi)$, or $S = L_{\tau_j}$, for some $1\leq j < k(\tau)$. We first assume that $R = M_{\pi_i}$, for some $x<i<k(\pi)$. We distinguish five cases.
	
\begin{figure}[htb]
	\begin{center}
		\begin{tabular}{c c c c c}
			\mbox{\includegraphics[scale = 0.9]{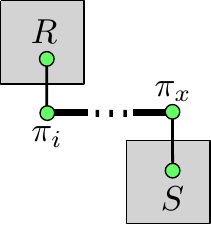}} & \hspace{3mm}
			\mbox{\includegraphics[scale = 0.9]{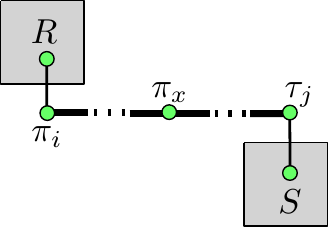}} & \hspace{3mm}
			\mbox{\includegraphics[scale = 0.9]{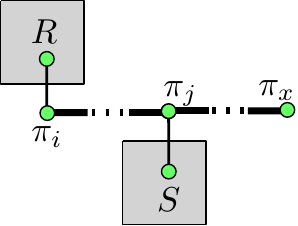}} & \hspace{3mm}
			\mbox{\includegraphics[scale = 0.9]{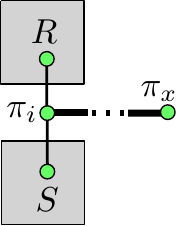}} & \hspace{3mm}
			\mbox{\includegraphics[scale = 0.9]{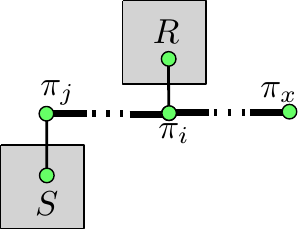}} \\
			(a) & \hspace{3mm}(b) & \hspace{3mm}(c)	 & \hspace{3mm}(d)	 & \hspace{3mm}(e)		
		\end{tabular}
		\caption{The five cases for the proof of inequality~(\ref{eq:nc+nd}).} 
		\label{fig:inequality3}
	\end{center}
\end{figure}

	{\em Case 1:} $S=L_{\pi_x}$ (see Fig.~\ref{fig:inequality3}(a)). We have that $|H_{\pi_i}|\geq r$; also, $|M_{\pi_x}|\geq s$. Since $L_{\pi_x}$, $M_{\pi_x}$, $M_{\pi_i}$, and $H_{\pi_i}$ are vertex-disjoint, we have $2r+2s\leq n$, which implies $r+s\leq \frac{2(p-1)}{3p}n$ if $p\geq 4$. 
	
	{\em Case 2:} $S=L_{\tau_j}$, for some $1\leq j < k(\tau)$  (see Fig.~\ref{fig:inequality3}(b)). We have $|H_{\pi_i}|\geq r$; also, $|H_{\tau_j}|\geq s$. Since $L_{\tau_j}$, $H_{\tau_j}$, $M_{\pi_i}$, and $H_{\pi_i}$ are vertex-disjoint, we have $2r+2s\leq n$, which implies $r+s\leq \frac{2(p-1)}{3p}n$ if $p\geq 4$. 
	
	{\em Case 3:} $S=L_{\pi_j}$, for some $x<j<i$  (see Fig.~\ref{fig:inequality3}(c)). We have $|H_{\pi_i}|\geq r$; also, $|M_{\pi_j}|\geq s$. Since $L_{\pi_j}$, $M_{\pi_j}$, $M_{\pi_i}$, and $H_{\pi_i}$ are vertex-disjoint, we have $2r+2s\leq n$, which implies $r+s\leq \frac{2(p-1)}{3p}n$ if $p\geq 4$.
	
	{\em Case 4:} $S=L_{\pi_i}$  (see Fig.~\ref{fig:inequality3}(d)). By definition of $x$ we have $|M_{\pi_x}|\geq n/p$. Since $T_{\pi_i}$ and $M_{\pi_x}$ are vertex-disjoint, we have $|T_{\pi_i}|\leq \frac{p-1}{p}n$. Since $|H_{\pi_i}|\geq |M_{\pi_i}|,|L_{\pi_i}|$, we have $r+s\leq \frac{2(p-1)}{3p}n$. 
	
	{\em Case 5:} $S=L_{\pi_j}$, for some $x<i<j<k(\pi)$  (see Fig.~\ref{fig:inequality3}(e)). As in Case~4, we have $|T_{\pi_i}|\leq \frac{p-1}{p}n$. Since $T_{\pi_{i+1}}=H_{\pi_i}$, we have $r\leq |T_{\pi_{i+1}}|$. Since $|H_{\pi_j}|,|M_{\pi_j}|\geq |L_{\pi_j}|$, we have $s\leq |T_{\pi_{i+1}}|/3$, hence $r+s\leq 4|T_{\pi_{i+1}}|/3$. On the other hand, at least $|T_{\pi_{i+1}}|-|L_{\pi_j}|\geq 2|T_{\pi_{i+1}}|/3$ nodes of $T_{\pi_{i+1}}$ do not belong to $R$ or $S$, hence $\frac{4|T_{\pi_{i+1}}|}{3}+\frac{2|T_{\pi_{i+1}}|}{3}=2|T_{\pi_{i+1}}|\leq \frac{p-1}{p}n$, which is $|T_{\pi_{i+1}}|\leq \frac{p-1}{2p}n$. It follows that $r+s\leq \frac{2(p-1)}{3p}n$. 
	
	This concludes the discussion if $R = M_{\pi_i}$, for some $x<i<k(\pi)$. Note that our arguments above do not make use of the fact that $|T_{\pi_{x+1}}|\geq |T_{\tau_1}|$ (which is true since $T_{\pi_{x+1}}=H_{\pi_x}$ and $T_{\tau_1}=M_{\pi_x}$). Hence, symmetric arguments handle the case in which $R = M_{\tau_i}$, for some $1\leq i < k(\tau)$. 
\end{proof}

The height of the part of $\Gamma$ below $\ell_Q$ is given by the height of a bottom subtree of $Q$. Further, since $\ell_Q$ is one unit below the bottommost horizontal line intersecting $P$ or its attached subtrees, the height of the part of $\Gamma$ above $\ell_Q$ is given by the maximum between the height of a top subtree of $Q$, and the height of a top subtree of $P$ plus the height of a bottom subtree of $P$ plus one (corresponding to $\ell_P$). Since the heights of a top subtree of $P$, of a bottom subtree of $P$, of a top subtree of $Q$, and of a bottom subtree of $Q$ are at most $\eta(a)$, $\eta(b)$, $\eta(r)$, and $\eta(s)$, respectively, by taking into account the grid row of $\ell_Q$ we get: 
\begin{eqnarray} \label{eq:h-equation}
\eta(n)\leq \max\{\eta(r)+\eta(s)+1,\eta(a)+\eta(b)+\eta(s)+2\}.
\end{eqnarray} 

%

We are going to inductively prove that
\begin{eqnarray} \label{eq:h-bound}
\eta(n)\leq 2 \cdot n^{c} -1, \textrm{ where } c=\frac{1}{\log_2{\frac{3p}{p-1}}}.
\end{eqnarray}  

Note that inequality~(\ref{eq:h-bound}) is trivially true if $n=1$. Now inductively assume that inequality~(\ref{eq:h-bound}) is true for all integer values of the variable less than $n$. By inequality~(\ref{eq:h-equation}), it suffices to prove that $\max\{\eta(r)+\eta(s)+1,\eta(a)+\eta(b)+\eta(s)+2\}\leq 2 \cdot n^{c} -1$. 

\begin{itemize}
\item First, we need to have $\eta(r)+\eta(s)+1\leq 2 \cdot n^{c} -1$. By induction, $\eta(r)+\eta(s)+1 \leq 2 \cdot r^{c} -1 + 2 \cdot s^{c} -1 +1$, hence we need that $r^{c} + s^c \leq n^c$. 

Here we use H\"older's inequality, which states that $\sum a_i b_i \leq (\sum a_i ^x)^{\frac{1}{x}}(\sum b_i ^y)^{\frac{1}{y}}$ for every $x$ and $y$ such that $1/x + 1/y=1$. By employing the values $1/x = c$, $1/y= 1-c$, $a_1=r^{c}$, $a_2=s^{c}$, $b_1=b_2=1$, we get:
\begin{eqnarray*} 
r^{c} + s^{c} \leq (r+s)^{c} \cdot 2^{1-c}\leq \left(\frac{2(p-1)}{3p}n\right)^{c} \cdot 2^{1-c}=2\cdot\left(\frac{p-1}{3p}n\right)^{c},
\end{eqnarray*}  
where we exploited inequality~(\ref{eq:nc+nd}). Thus, we need $2\cdot(\frac{p-1}{3p}n)^{c}\leq n^c$, which is $2\cdot (\frac{p-1}{3p})^{1/\left(\log_2{\frac{3p}{p-1}}\right)}\leq 1$. Set $x=\frac{3p}{p-1}$; then the previous inequality becomes $(1/x)^{1/\log_2 x}\leq 1/2$; taking the base-$2$ logarithms, we have $\log_2(1/x)^{1/\log_2 x}\leq \log_2(1/2)$, hence $\frac{1}{\log_2 x} \log_2(1/x) \leq -1$ which is $-1 \leq -1$. This proves that $\eta(r)+\eta(s)+1\leq 2 \cdot n^{c} -1$.  

\item Second, we need to have $\eta(a)+\eta(b)+\eta(s)+2\leq 2 \cdot n^{c} -1$. By inequality~(\ref{eq:nd}), we have $\eta(a)+\eta(b)+\eta(s)+2\leq \eta(a)+\eta(b)+\eta(\frac{n-a-b}{3})+2$. By induction, $\eta(a)\leq 2 \cdot a^c -1$, $\eta(b)\leq 2 \cdot b^c -1$, and $\eta(\frac{n-a-b}{3})\leq 2 \cdot (\frac{n-a-b}{3})^c -1$, hence we need that $a^c + b^c + (\frac{n-a-b}{3})^c \leq n^c$. 

We prove that $f(a,b)=a^c + b^c + (\frac{n-a-b}{3})^c$ grows monotonically with $a$, by assuming that $0.5<c<1$; this assumption will be verified later. We have $\frac{\partial f(a,b)}{\partial a}=c \cdot a^{c-1}-\frac{c}{3} \cdot(\frac{n-a-b}{3})^{c-1}$, which is greater than zero as long as $a^{c-1} > \frac{1}{3^c} \cdot(n-a-b)^{c-1}$. Since $c<1$, we have that $c-1$ is negative, hence by raising the previous inequality to the power of $1/(c-1)$ we get $a<3^{c/(1-c)}\cdot (n-a-b)$, which is $a<\frac{3^{c/(1-c)}}{1+3^{c/(1-c)}}\cdot (n-b)$. By inequality~(\ref{eq:na-nb}) the latter is true as long as $\frac{n}{p}<\frac{3^{c/(1-c)}}{1+3^{c/(1-c)}}\cdot \frac{p-1}{p} \cdot n$, that is $\frac{3^{c/(1-c)}}{1+3^{c/(1-c)}}\cdot (p-1)>1$. Since $c>0.5$ we get that $3^{c/(1-c)}>3$, hence $\frac{3^{c/(1-c)}}{1+3^{c/(1-c)}}>\frac{3}{4}$. Since $p>4$, the inequality $\frac{3^{c/(1-c)}}{1+3^{c/(1-c)}}\cdot (p-1)>1$ is satisfied, hence $\frac{\partial f(a,b)}{\partial a}>0$ and $f(a,b)$ grows monotonically with $a$.

It can be proved analogously that $f(a,b)$ grows monotonically with $b$, as long as $0.5<c<1$. 

By inequality~(\ref{eq:na-nb}) we have $a,b<n/p$, hence the monotonicity of $f(a,b)$ we proved above implies $a^c + b^c + \left(\frac{n-a-b}{3}\right)^c < 2\cdot (n/p)^c+ \left( \frac{n-2n/p}{3} \right)^c$. Thus, we need $2\cdot (n/p)^c+ \left( \frac{n-2n/p}{3} \right)^c\leq n^c$. Dividing by $n^c$, the inequality becomes $2\cdot (1/p)^c+ \left( \frac{1-2/p}{3} \right)^c-1 \leq 0$. Thus, we need to choose $p$ so to satisfy $2\cdot (1/p)^{1/\log_2{\frac{3p}{p-1}}}+ \left( \frac{1-2/p}{3} \right)^{1/\log_2{\frac{3p}{p-1}}}-1 \leq 0$; the latter inequality is true\footnote{We used the software at {\sc www.wolframalpha.com} in order to solve the inequality.} if $p\geq 9.956$. Thus setting $p=9.956$ we have $\eta(a)+\eta(b)+\eta(s)+2\leq 2 \cdot n^{c} -1$.
\end{itemize}

From $p=9.956$ we get $c=0.576$. By inequality~(\ref{eq:h-bound}) the height of $\Gamma$ is in  \generalHeight. This completes the proof of the height and area bounds for $\Gamma$.


Finally, we describe how to modify the construction if $x=1$, if $x=2$, or if $x$ is undefined. If $x$ is undefined, then $\pi$ ``never turns down'', that is, the construction coincides with the one above with $x=k(\pi)+1$. If $x=1$, then the construction coincides with the one above starting from $\pi_x$, that is, ignoring the paths $\rho$, $\sigma$, and $(\pi_1,\dots,\pi_{x-1})$ and their attached subtrees. If $x=2$, then $\pi$ ``immediately turns down'': the second heaviest subtree of $r(T)=\pi_1=\pi_{x-1}$ is drawn as above, while its lightest subtree is drawn as the second heaviest subtree of $\pi_{x-1}$ above ($\sigma$ is drawn straight with its subtrees attached to it); the rest of the construction, starting from $\pi_x$, coincides with the one above. In each of these cases, the analysis on the width, height, and area of the constructed drawings does not change. This concludes the proof of Theorem~\ref{th:general}.

\section{Complete Ternary Trees} \label{se:complete}

In this section we study the area requirements of planar straight-line orthogonal drawings of complete ternary trees. Restricting the attention to complete ternary trees allows one to achieve better area bounds than the one from Theorem~\ref{th:general}. Indeed, Ali~\cite{a-so-15} proved that  planar straight-line orthogonal drawings can be constructed in $O(n^{1.118})$ area\footnote{This result improved on an earlier $O(n^{1.262})$ bound~\cite{f-so-07}, which will be discussed later.}; the recursive construction achieving such a bound is depicted in Fig.~\ref{fig:ali}.

\begin{figure}[htb]
	\begin{center}
	\mbox{\includegraphics[scale = 0.75]{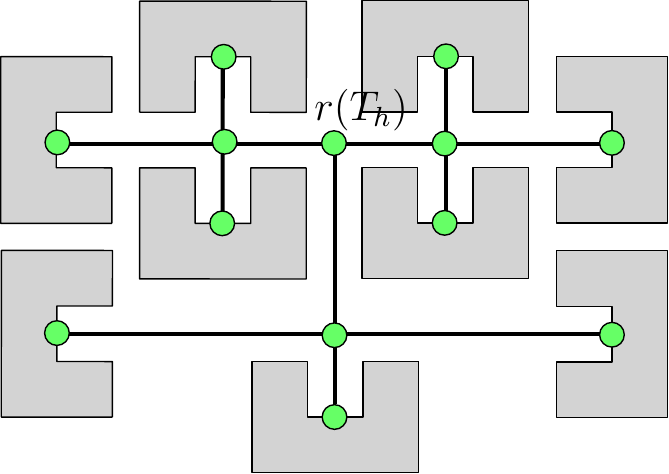}}
	\caption{Ali's~\cite{a-so-15} construction of a planar straight-line orthogonal drawing of the complete ternary tree $T_h$ with height $h$ uses $9$ copies of the inductively constructed drawing of $T_{h-2}$.} 
		\label{fig:ali}
	\end{center}
\end{figure}

We aim at investigating what area bounds can be achieved for planar straight-line orthogonal drawings of the complete ternary tree $T_h$ with height $h$ that are constructed by using (suitable combinations of) the two inductive constructions depicted in Figs.~\ref{fig:frati}(a) and~\ref{fig:frati}(b). These constructions are called Construction~1 and Construction~2, respectively. Before formally defining such constructions, we need to introduce the following definitions. For a drawing of a ternary tree, the {\em left width} is the number of grid columns intersecting the drawing to the left of the root; the {\em right width}, the {\em top height}, and the {\em bottom height} are defined analogously. 

\begin{figure}[tb]
	\begin{center}
		\begin{tabular}{c c}
			\mbox{\includegraphics[scale = 0.65]{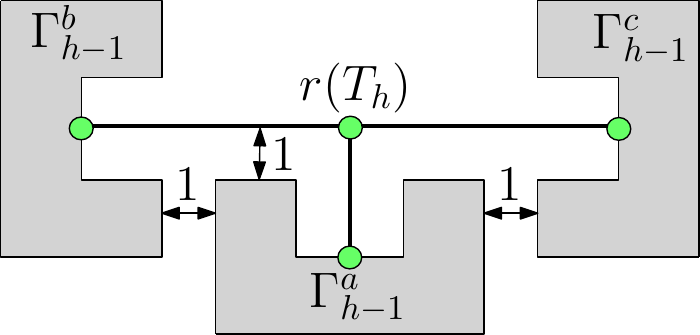}} & \hspace{8mm}
			\mbox{\includegraphics[scale = 0.65]{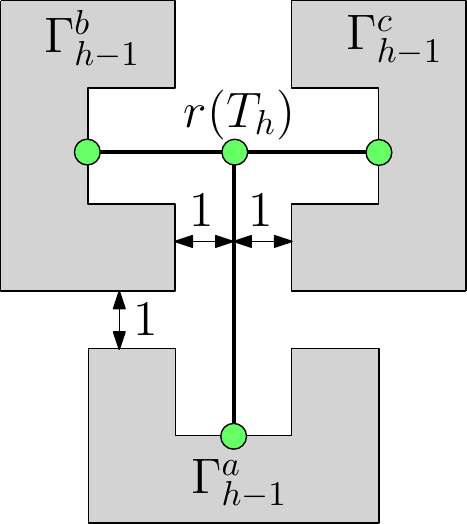}} \\
			(a) & \hspace{8mm}(b) 
		\end{tabular}
		\caption{(a) Construction~1 and (b) Construction~2. Each of them constructs a planar straight-line orthogonal drawing of $T_h$ out of $3$ (not necessarily congruent) 1-2 drawings $\Gamma^a_{h-1}$, $\Gamma^b_{h-1}$, and $\Gamma^c_{h-1}$ of $T_{h-1}$. A ``$\sqcup$''-shape is used to represent a 1-2 drawing to emphasize that the vertical half-line emanating from the root and directed upwards does not intersect the drawing other than at the root.} 
		\label{fig:frati}
	\end{center}
\end{figure}

A {\em 1-2 drawing} of $T_h$ is formally defined as follows. If $h=1$, then a drawing of $T_h$ in which the root $r(T_h)$ of $T_h$ is placed at a grid point of the plane is a 1-2 drawing; observe that this drawing is unique up to a translation by any vector with integer coordinates. If $h>1$, then consider any three (not necessarily congruent) 1-2 drawings of $T_{h-1}$, call them $\Gamma^a_{h-1}$, $\Gamma^b_{h-1}$, and $\Gamma^c_{h-1}$. Arrange such drawings as in Construction~1 or as in Construction~2; then the resulting drawing of $T_h$ is a 1-2 drawing. 

Construction~1 is more precisely defined as follows:

\begin{itemize}
\item draw $r(T_h)$ at any grid point of the plane;
\item place $\Gamma^a_{h-1}$ in the plane so that the highest horizontal line intersecting it is one unit below $r(T_h)$ and so that the root $r(T_{h-1})$ in $\Gamma^a_{h-1}$ is on the same vertical line as $r(T_h)$;
\item rotate $\Gamma^b_{h-1}$ in clockwise direction by $90\degree$ and then place it in the plane so that the rightmost vertical line intersecting it is one unit to the left of the leftmost vertical line intersecting $\Gamma^a_{h-1}$ and so that the root $r(T_{h-1})$ in $\Gamma^b_{h-1}$ is on the same horizontal line as $r(T_h)$; and
\item rotate $\Gamma^c_{h-1}$ in counterclockwise direction by $90\degree$ and then place it in the plane so that the leftmost vertical line intersecting it is one unit to the right of the rightmost vertical line intersecting $\Gamma^a_{h-1}$ and so that the root $r(T_{h-1})$ in $\Gamma^c_{h-1}$ is on the same horizontal line as $r(T_h)$.
\end{itemize}

The following descends immediately by construction.

\begin{property} \label{pr:construction-1}
Let $h\geq 2$ and, for $i\in \{a,b,c\}$, let $\omega_i$, $\eta_i$, $\lambda_i$, and $\rho_i$ be the width, the height, the left width and the right width of $\Gamma^i_{h-1}$, respectively. Then the width of the 1-2 drawing constructed by means of Construction~1 is $\omega_a+\eta_b+\eta_c$ and its height is $\max\{\lambda_b,\rho_c\}+\max\{\rho_b,\eta_a,\lambda_c\}+1$.
\end{property}

Construction~2 is analogously defined as follows:

\begin{itemize}
	\item draw $r(T_h)$ at any grid point of the plane;
	\item rotate $\Gamma^b_{h-1}$ in clockwise direction by $90\degree$ and then place it in the plane so that the rightmost vertical line intersecting it is one unit to the left of $r(T_h)$ and so that the root $r(T_{h-1})$ in $\Gamma^b_{h-1}$ is on the same horizontal line as $r(T_h)$; 
	\item rotate $\Gamma^c_{h-1}$ in counterclockwise direction by $90\degree$ and then place it in the plane so that the leftmost vertical line intersecting it is one unit to the right of $r(T_h)$ and so that the root $r(T_{h-1})$ in $\Gamma^c_{h-1}$ is on the same horizontal line as $r(T_h)$; and 
	\item place $\Gamma^a_{h-1}$ in the plane so that the highest horizontal line intersecting it is one unit below the lowest horizontal line intersecting $\Gamma^b_{h-1}$ or $\Gamma^c_{h-1}$ and so that the root $r(T_{h-1})$ in $\Gamma^a_{h-1}$ is on the same vertical line as $r(T_h)$.
\end{itemize}

The following descends immediately by construction.

\begin{property} \label{pr:construction-2}
	Let $h\geq 2$ and, for $i\in \{a,b,c\}$, let $\omega_i$, $\eta_i$, $\lambda_i$, and $\rho_i$ be the width, the height, the left width and the right width of $\Gamma^i_{h-1}$, respectively. Then the width of the 1-2 drawing constructed by means of Construction~2 is $\max\{\lambda_a,\eta_b\}+\max\{\rho_a,\eta_c\}+1$ and its height is $\max\{\lambda_b,\rho_c\}+\max\{\rho_b,\lambda_c\}+\eta_a+1$.
\end{property}

A 1-2 drawing has a nice feature that the drawings of Ali~\cite{a-so-15} do not have, called {\em subtree separation}  property: the smallest axis-parallel rectangles enclosing the drawings of any two node-disjoint subtrees do not overlap. This property has been frequently considered in the tree drawing literature (see, e.g.,~\cite{cgkt-oa-02,gr-sdbtla-02,r-tda-16,rs-gdbt-08}), both because of the readability of the drawings that have it and because it is directly guaranteed by the following natural approach for drawing trees: Inductively construct drawings of the subtrees of the root and place them together so that the smallest axis-parallel rectangles enclosing them do not overlap; a placement of the root in the plane then completes the drawing. The notorious {\em H-drawings}~\cite{s-lpag-76} and {\em HV-drawings}~\cite{cdp-no-92} are planar straight-line orthogonal drawings that satisfy the subtree separation property; they can be constructed in linear area for complete binary trees (see Figs.~\ref{fig:complete-old}(d) and~\ref{fig:complete-old}(e), respectively).  

\begin{figure}[htb]
	\begin{center}
		\begin{tabular}{c c c}
			\mbox{\includegraphics[scale = 0.6]{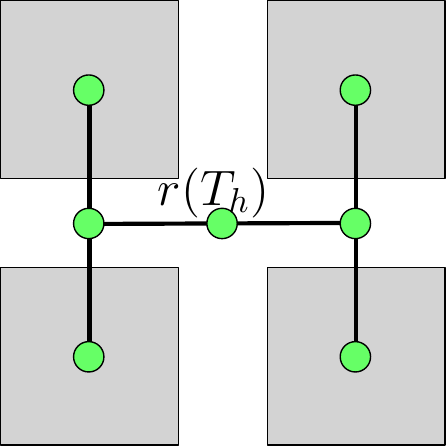}} & \hspace{8mm}
			\mbox{\includegraphics[scale = 0.6]{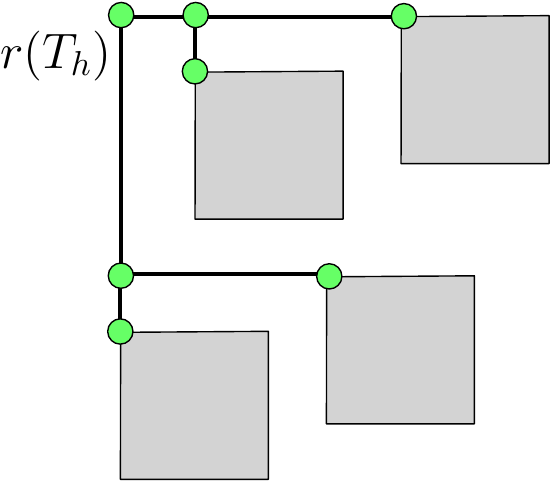}}\\
			(a) & \hspace{8mm}(b)\\
		\end{tabular}
		\caption{(a) H-drawings, as in~\cite{s-lpag-76}. (b) HV-drawings, as in~\cite{cdp-no-92}.} 
		\label{fig:complete-old}
	\end{center}
\end{figure}

In Section~\ref{se:12-optimal} we prove that, for complete ternary trees, 1-2 drawings require minimum area among all the planar straight-line orthogonal drawings satisfying the subtree separation property. Motivated by this result and exploiting a polynomial-time algorithm for computing minimum-area 1-2 drawings of complete ternary trees, in Section~\ref{se:experimental} we devise an extensive experimentation of the minimum area required by any 1-2 drawing of a complete ternary tree. However, we first analytically prove some asymptotic upper bounds for the area required by such drawings. This is done in the upcoming Section~\ref{se:complete-upper}.

\subsection{Upper Bounds for the Area Requirements of 1-2 Drawings} \label{se:complete-upper}

We start by considering 1-2 drawings that are constructed by using only one between Constructions~1 and~2. These drawings do not have the best area bounds, however they are natural to look at and easy to analyze. We have the following.

\begin{lemma} \label{le:construction-1-only}
Let $\Gamma_h$ be the 1-2 drawing of $T_h$ that is constructed by using Construction~1 only. Then $\Gamma_h$ has width $\omega_h=2^{h}-1$ and height $\eta_h=2^{h-1}$.
\end{lemma}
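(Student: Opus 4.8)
The plan is to prove, by induction on $h$, a statement slightly stronger than the one in the lemma: besides $\omega_h = 2^h-1$ and $\eta_h = 2^{h-1}$, I would also carry the invariant that the left width $\lambda_h$ and the right width $\rho_h$ of $\Gamma_h$ both equal $2^{h-1}-1$. Strengthening the hypothesis this way is essential, because Property~\ref{pr:construction-1} expresses the height of a Construction-1 drawing in terms of the left and right widths of the three sub-drawings, not merely their widths; hence an induction that tracks only $\omega_h$ and $\eta_h$ would not close. The base case $h=1$ is immediate: $T_1$ is a single node, so $\Gamma_1$ is a single grid point and $\omega_1 = \eta_1 = 1$, $\lambda_1 = \rho_1 = 0$, which matches the claimed values.

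For the inductive step, let $h \ge 2$ and assume the invariant for $h-1$. Since $\Gamma_h$ is obtained by Construction~1 from three copies of $\Gamma_{h-1}$, I would first record three recurrences that follow directly from the placement rules of the construction (the last being exactly Property~\ref{pr:construction-1} specialized to congruent sub-drawings): the copy $\Gamma^b_{h-1}$, after the clockwise $90\degree$ rotation, forms the leftmost part of $\Gamma_h$ and extends $\eta_{h-1}$ columns to the left of the leftmost column of $\Gamma^a_{h-1}$, so $\lambda_h = \lambda_{h-1} + \eta_{h-1}$; symmetrically $\rho_h = \rho_{h-1} + \eta_{h-1}$; and $\eta_h = \max\{\lambda_{h-1},\rho_{h-1}\} + \max\{\lambda_{h-1},\rho_{h-1},\eta_{h-1}\} + 1$. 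Plugging in the inductive hypothesis $\lambda_{h-1} = \rho_{h-1} = 2^{h-2}-1$ and $\eta_{h-1} = 2^{h-2}$ yields $\lambda_h = \rho_h = (2^{h-2}-1) + 2^{h-2} = 2^{h-1}-1$, hence $\omega_h = \lambda_h + \rho_h + 1 = 2^h-1$. For the height, the inductive hypothesis makes the two maxima collapse: since $\lambda_{h-1} = \rho_{h-1} = 2^{h-2}-1 < 2^{h-2} = \eta_{h-1}$, the first maximum equals $2^{h-2}-1$ and the second equals $2^{h-2}$, so $\eta_h = (2^{h-2}-1) + 2^{h-2} + 1 = 2^{h-1}$, which completes the induction.

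The main (and essentially the only) obstacle is the one already flagged: one must notice that the naive induction on width and height has to be strengthened to also track the left and right widths, and one must observe that Construction~1 keeps the drawing left-right symmetric, so that $\lambda_h = \rho_h$ at every level; this symmetry is precisely what makes the maxima in the height recurrence resolve to clean powers of two. Everything else is routine arithmetic with the recurrences.
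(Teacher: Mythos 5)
Your proof is correct and follows essentially the same route as the paper: an induction on $h$ that strengthens the statement to also track the left and right widths ($\lambda_h=\rho_h=2^{h-1}-1$), then applies Property~\ref{pr:construction-1} for the step. The only cosmetic difference is that you recover $\omega_h$ as $\lambda_h+\rho_h+1$ while the paper uses the width formula $\omega_h=\omega_{h-1}+2\eta_{h-1}$ directly.
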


\begin{proof}
We prove, by induction on $h$, that $\Gamma_h$ has width $\omega_h=2^{h}-1$, has height $\eta_h=2^{h-1}$, and has the left width $\lambda_h$ equal to the right width $\rho_h$ (hence $\lambda_h=\rho_h=2^{h-1}-1$). 

The base case, in which $h=1$, is trivially proved, given that $\omega_1=\eta_1=1$ and $\lambda_1=\rho_1=0$. 

For the inductive case, assume that $\omega_{h-1}=2^{h-1}-1$, $\eta_{h-1}=2^{h-2}$, and $\lambda_{h-1}=\rho_{h-1}=2^{h-2}-1$. By Property~\ref{pr:construction-1} and since $\lambda_{h-1}=\rho_{h-1}$, we have $\omega_{h}= \omega_{h-1} + 2\cdot \eta_{h-1} = (2^{h-1}-1) + 2\cdot 2^{h-2}=2^h-1$ and $\eta_h=(2^{h-2}-1) + 2^{h-2} +1 = 2^{h-1}$. Finally, $\lambda_h=\eta_{h-1}+\lambda_{h-1}$ and $\rho_h=\eta_{h-1}+\rho_{h-1}$, hence $\lambda_h=\rho_h$. 
\end{proof}

\begin{lemma} \label{le:construction-2-only}
Let $\Gamma_h$ be the 1-2 drawing of $T_h$ that is constructed by using Construction~2 only. Then $\Gamma_h$ has:
\begin{itemize}
	\item width $\omega_h= (2^{h+1}-1)/3$ and height $\eta_h= (2^{h+1}-1)/3$ if $h$ is odd; and 
	\item width $\omega_h= (2^{h+1}+1)/3$ and height $\eta_h= (2^{h+1}-2)/3$ if $h$ is even.
\end{itemize} 
\end{lemma}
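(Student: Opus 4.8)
The plan is to prove by induction on $h$ a statement slightly stronger than the lemma: besides the claimed values of the width $\omega_h$ and the height $\eta_h$, the drawing $\Gamma_h$ has left width $\lambda_h$ equal to its right width $\rho_h$, with $\lambda_h=\rho_h=(2^h-2)/3$ when $h$ is odd and $\lambda_h=\rho_h=(2^h-1)/3$ when $h$ is even. The point of carrying $\lambda_h$ and $\rho_h$ along is that Property~\ref{pr:construction-2} expresses the width and height of a Construction~2 drawing in terms of the \emph{left} and \emph{right} widths (not just the widths) of its three ingredients, so the induction cannot close on $\{\omega_h,\eta_h\}$ alone. Since $\Gamma_h$ is built using Construction~2 \emph{only}, the three 1-2 drawings of $T_{h-1}$ combined to form $\Gamma_h$ are all congruent copies of $\Gamma_{h-1}$; this is the fact I will feed into Property~\ref{pr:construction-2} at each step. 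The base case $h=1$ is immediate: $\Gamma_1$ is a single point, so $\omega_1=\eta_1=1$ and $\lambda_1=\rho_1=0$, which matches the (odd‑$h$) formulas since $(2^1-2)/3=0$ and $(2^2-1)/3=1$.

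For the inductive step, assume the strengthened statement for $h-1\geq 1$; in particular $\lambda_{h-1}=\rho_{h-1}$. I first record three identities for $\Gamma_h$. Reading the placement rules of Construction~2 directly — the clockwise‑rotated copy $\Gamma^b_{h-1}$ has width $\eta_{h-1}$ after the $90\degree$ rotation and is placed with its rightmost column one unit to the left of $r(T_h)$, the unrotated copy $\Gamma^a_{h-1}$ has its root on the vertical line through $r(T_h)$ and so reaches $\lambda_{h-1}$ columns to the left of it, and $\Gamma^c_{h-1}$ lies entirely to the right of $r(T_h)$ — gives $\lambda_h=\rho_h=\max\{\eta_{h-1},\lambda_{h-1}\}$, hence $\omega_h=\lambda_h+\rho_h+1=2\max\{\eta_{h-1},\lambda_{h-1}\}+1$. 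For the height, Property~\ref{pr:construction-2} with $\lambda_b=\rho_b=\lambda_c=\rho_c=\lambda_{h-1}$ and $\eta_a=\eta_{h-1}$ gives $\eta_h=2\lambda_{h-1}+\eta_{h-1}+1$. Finally, a one‑line comparison of the induction hypothesis shows $\eta_{h-1}\geq \lambda_{h-1}$ in both parities (e.g.\ $2^h-1\geq 2^{h-1}-2$ and $2^h-2\geq 2^{h-1}-1$), so the maximum collapses and $\lambda_h=\rho_h=\eta_{h-1}$ and $\omega_h=2\eta_{h-1}+1$.

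It then only remains to substitute the hypothesis into these identities, which is routine. If $h$ is even then $h-1$ is odd, so $\eta_{h-1}=(2^h-1)/3$ and $\lambda_{h-1}=(2^{h-1}-2)/3$, and the identities yield $\omega_h=(2^{h+1}+1)/3$ and $\eta_h=(2^{h+1}-2)/3$ (and $\lambda_h=\rho_h=(2^h-1)/3$); if $h$ is odd then $h-1$ is even, so $\eta_{h-1}=(2^h-2)/3$ and $\lambda_{h-1}=(2^{h-1}-1)/3$, and the identities yield $\omega_h=\eta_h=(2^{h+1}-1)/3$ (and $\lambda_h=\rho_h=(2^h-2)/3$); in each case the formulas of the hypothesis are also valid at $h-1=1$, so no separate treatment of $h=2$ is needed. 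The only genuinely delicate point is the first identity $\lambda_h=\rho_h=\max\{\eta_{h-1},\lambda_{h-1}\}$: it is not literally stated in Property~\ref{pr:construction-2}, so it must be extracted from the placement rules, and the left–right symmetry $\lambda_h=\rho_h$ has to be propagated through the induction — it is exactly this symmetry that reduces the general Construction~2 recurrences to the single order‑two recurrence $\eta_h=\eta_{h-1}+2\eta_{h-2}+1$, whose characteristic roots $2$ and $-1$ explain why the closed form splits into the two parity cases.
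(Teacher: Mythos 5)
Your proof is correct and follows essentially the same route as the paper's: an induction strengthened with the left/right-width symmetry and its explicit values, Property~\ref{pr:construction-2} applied to three congruent copies of $\Gamma_{h-1}$, the observation $\eta_{h-1}\geq\lambda_{h-1}$ to collapse the maxima, and a parity-split substitution. The only differences are cosmetic (base case at $h=1$ instead of $h\leq 2$, and writing the recurrences via $2\lambda_{h-1}+\eta_{h-1}+1$ rather than $\omega_{h-1}+\eta_{h-1}$).
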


\begin{proof}
We prove, by induction on $h$, that $\Gamma_h$ has width and height as in the statement of the lemma, and that its left width $\lambda_h$ is equal to its right width $\rho_h$ (hence $\lambda_h=\rho_h=(2^h-2)/3$ if $h$ is odd and $\lambda_h=\rho_h=(2^h-1)/3$ if $h$ is even).

We prove the statement by induction on $h$. The base case, in which $h\leq 2$, is trivially proved, given that $\omega_1=\eta_1=1$, $\lambda_1=\rho_1=0$, $\omega_2=3$, $\eta_2=2$, and $\lambda_2=\rho_2=1$. 

For the inductive case, assume that the statement holds for $\Gamma_{h-1}$. By Property~\ref{pr:construction-2} and since $\eta_{h-1}>\lambda_{h-1}=\rho_{h-1}$, we have $\omega_h=\eta_{h-1} + \eta_{h-1} +1 = 2\cdot \eta_{h-1} + 1$ and $\eta_h= \lambda_{h-1}+\rho_{h-1}+\eta_{h-1}+1 = \omega_{h-1}+\eta_{h-1}$. 

\begin{itemize}
	\item If $h$ is odd, then $\omega_{h-1}= (2^{h}+1)/3$ and $\eta_{h-1}= (2^{h}-2)/3$, hence $\omega_h=2 \cdot (2^{h}-2)/3 +1 = (2^{h+1}-1)/3$ and $\eta_h= (2^{h}+1)/3 + (2^{h}-2)/3 = (2^{h+1}-1)/3$.
	\item If $h$ is even, then $\omega_{h-1}= (2^{h}-1)/3$ and $\eta_{h-1}= (2^{h}-1)/3$, hence $\omega_h= 2\cdot (2^{h}-1)/3 + 1 = (2^{h+1}+1)/3$ and height $\eta_h= (2^{h}-1)/3+ (2^{h}-1)/3 = (2^{h+1}-2)/3$.
\end{itemize}

Finally, since $\eta_{h-1}>\lambda_{h-1}=\rho_{h-1}$, we have $\lambda_h=\rho_h=\eta_{h-1}$. 
\end{proof}

\begin{corollary}
The 1-2 drawing of the complete ternary tree with $n$ nodes that is constructed by using Construction~1 only or by using Construction~2 only has width $O(n^{\log_3 2})\in O(n^{0.631})$, height $O(n^{\log_3 2})\in O(n^{0.631})$, and area $O(n^{\log_3 4})\in O(n^{1.262})$.
\end{corollary}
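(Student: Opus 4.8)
The plan is to substitute the exact closed-form width and height formulas furnished by Lemmas~\ref{le:construction-1-only} and~\ref{le:construction-2-only} into an expression of the tree height $h$ as a function of the number $n$ of nodes, and then take the product to bound the area. There is no recursion left to unfold: the work has already been done inside the two lemmas, so the corollary is essentially a change of variable from $h$ to $n$.

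First I would record that the complete ternary tree $T_h$ has $n=|T_h|=\sum_{i=0}^{h-1}3^i=(3^h-1)/2$ nodes, so that $3^h=2n+1$ and hence $2^h=(3^h)^{\log_3 2}=(2n+1)^{\log_3 2}\in\Theta(n^{\log_3 2})$. Since $\log_3 2\approx 0.6309$ this gives $2^h\in O(n^{0.631})$, and squaring, $4^h=(2^h)^2=(2n+1)^{\log_3 4}\in\Theta(n^{\log_3 4})$ with $\log_3 4=2\log_3 2\approx 1.2619$, so $4^h\in O(n^{1.262})$.

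Next I would plug these estimates into the two lemmas. By Lemma~\ref{le:construction-1-only}, the drawing obtained with Construction~1 only has width $2^h-1<2^h$ and height $2^{h-1}<2^h$, both in $O(n^{\log_3 2})$, and therefore area less than $2^h\cdot 2^h=4^h\in O(n^{\log_3 4})$. By Lemma~\ref{le:construction-2-only}, regardless of the parity of $h$, the width and the height of the drawing obtained with Construction~2 only are each at most $(2^{h+1}+1)/3<2^{h+1}$, again in $O(n^{\log_3 2})$, so its area is likewise $O(4^h)=O(n^{\log_3 4})$.

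Since the recurrences have already been solved in the two lemmas and the parity split is already absorbed there, nothing beyond this substitution is required. The only point that calls for a little care — and the closest thing to an obstacle — is getting the node count $|T_h|=(3^h-1)/2$ right and keeping the inequalities pointed in the correct direction, so that all the bounds come out as clean $O(\cdot)$ statements in $n$.
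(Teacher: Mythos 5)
Your proposal is correct and matches the paper's proof, which likewise just combines Lemma~\ref{le:construction-1-only}, Lemma~\ref{le:construction-2-only}, and the substitution $h=\log_3(2n+1)$ (equivalently $n=(3^h-1)/2$, which you compute correctly). You have simply written out explicitly the same one-line change of variable that the paper leaves implicit.
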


\begin{proof}
The statement follows by Lemma~\ref{le:construction-1-only}, by Lemma~\ref{le:construction-2-only}, and by $h=\log_3(2n+1)$.
\end{proof}

Our next algorithm constructs 1-2 drawings with height $O(n^{\log_3 \phi})\in O(n^{0.439})$, where $\phi$ is the golden ratio. It turns out that this height bound is the smallest possible. In fact, in Section~\ref{se:lower-bound} we will prove that every planar straight-line orthogonal drawing of a complete ternary tree with $n$ nodes requires height $\Omega(n^{\log_3 \phi})$, even if the drawing is not required to satisfy the subtree separation property.

\begin{theorem} \label{th:minimum-side-upper-bound}
The complete ternary tree with $n$ nodes has a 1-2 drawing with height $O(n^{\log_3 ((1+\sqrt 5)/2)})\in O(n^{0.439})$, with width $O(n^{\log_3 (1+\sqrt 2)})\in O(n^{0.803})$, and with area $O(n^{1.242})$.
\end{theorem}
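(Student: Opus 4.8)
The plan is to define, by simultaneous induction on $h$, two families of 1-2 drawings of $T_h$: a drawing $\Gamma^A_h$ that is \emph{short} (small height, at the price of a larger width) and a drawing $\Gamma^B_h$ that is \emph{narrow} (small width, at the price of a larger height). For $h=1$ both are the single-point drawing. For $h\geq 2$ I would build $\Gamma^A_h$ with Construction~1, drawing its bottom child recursively as $\Gamma^A_{h-1}$ and its two side children (those that get rotated by $90\degree$, so that their \emph{heights} contribute to the width of the result while their small half-widths contribute to its height) recursively as $\Gamma^B_{h-1}$; symmetrically I would build $\Gamma^B_h$ with Construction~2, drawing its bottom child as $\Gamma^B_{h-1}$ and its two side children as $\Gamma^A_{h-1}$. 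The reason Construction~1 is the right choice for the short family is that in Property~\ref{pr:construction-1} the height $\eta_a$ of the bottom subtree appears \emph{inside} a maximum, while the reason Construction~2 is the right choice for the narrow family is that in Property~\ref{pr:construction-2} only the \emph{half-widths} $\lambda_a,\rho_a$ of the bottom subtree appear in the width expression. Feeding congruent (or, where needed, mirrored) copies of the recursively built drawings to the two side slots keeps every drawing symmetric about the vertical line through its root, so its left and right widths always coincide; write $L_A(h)$ and $L_B(h)$ for these common values and $H_A(h),H_B(h)$ for the heights. Each $\Gamma^A_h$ and $\Gamma^B_h$ is a 1-2 drawing by definition, being obtained from 1-2 drawings of $T_{h-1}$ via Construction~1 or~2.

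Next I would specialize Properties~\ref{pr:construction-1} and~\ref{pr:construction-2} to these families. Substituting the parameters of $\Gamma^A_{h-1}$ and $\Gamma^B_{h-1}$ into the two properties yields the system $H_A(h)=L_B(h-1)+\max\{L_B(h-1),H_A(h-1)\}+1$, $L_B(h)=\max\{L_B(h-1),H_A(h-1)\}$, $L_A(h)=L_A(h-1)+H_B(h-1)$, and $H_B(h)=2L_A(h-1)+H_B(h-1)+1$, with $L_A(1)=L_B(1)=0$, $H_A(1)=H_B(1)=1$, and with $W_A(h)=2L_A(h)+1$ and $W_B(h)=2L_B(h)+1$. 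The first two equations govern the height of $\Gamma^A_h$ and the second two its width, and the two pairs are decoupled.

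Then I would solve the two subsystems. Eliminating $H_B$ from the pair $(L_A,H_B)$ gives $L_A(h+1)=2L_A(h)+L_A(h-1)+1$, a linear recurrence whose characteristic polynomial $t^2-2t-1$ has dominant root $1+\sqrt2$; hence $L_A(h)=\Theta((1+\sqrt2)^h)$, so $W_A(h)=\Theta((1+\sqrt2)^h)$ (and $H_B(h)=\Theta((1+\sqrt2)^h)$ as well). For the pair $(L_B,H_A)$ I would first establish the invariant $H_A(h)\geq L_B(h)$ for all $h$ by induction: it holds for $h\le 2$, and since the hypothesis $H_A(h-1)\geq L_B(h-1)$ makes the maximum equal $H_A(h-1)$, we get $H_A(h)=L_B(h-1)+H_A(h-1)+1>H_A(h-1)=L_B(h)$. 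With the maxima resolved, the recurrence collapses to $H_A(h)=H_A(h-1)+H_A(h-2)+1$, of Fibonacci type, whence $H_A(h)=\Theta(\phi^h)$ with $\phi=(1+\sqrt5)/2$. Since $T_h$ has $n=(3^h-1)/2$ nodes, $3^h=\Theta(n)$, so $\phi^h=\Theta(n^{\log_3\phi})$ and $(1+\sqrt2)^h=\Theta(n^{\log_3(1+\sqrt2)})$; therefore $\Gamma^A_h$ has height $O(n^{\log_3((1+\sqrt5)/2)})\subseteq O(n^{0.439})$, width $O(n^{\log_3(1+\sqrt2)})\subseteq O(n^{0.803})$, and area $O(n^{\log_3\phi+\log_3(1+\sqrt2)})=O(n^{\log_3(\phi(1+\sqrt2))})\subseteq O(n^{1.242})$, as claimed.

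I expect the main difficulty to be bookkeeping rather than conceptual: one must track precisely how Construction~1 and Construction~2 rotate the two side subtrees, which of the quantities $\lambda,\rho,\eta$ of each side subtree lands above versus below the new root, and confirm that symmetry ($\lambda=\rho$) is preserved at every step — possibly by feeding a drawing and its mirror to the two side slots. A second point that needs care is certifying, throughout the induction, which branch of every $\max$ in Properties~\ref{pr:construction-1} and~\ref{pr:construction-2} is actually attained, i.e.\ maintaining the invariant $H_A(h)\ge L_B(h)$ (and the analogous comparisons that, for instance, justify $W_A(h)=2L_A(h)+1$); it is exactly these comparisons that upgrade the raw inequalities coming from the two properties into the exact linear and Fibonacci recurrences that produce the stated exponents.
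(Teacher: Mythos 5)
Your proposal is correct and follows essentially the same route as the paper's proof: the same two mutually recursive families (your $\Gamma^A_h,\Gamma^B_h$ are the paper's $\Gamma^1_h,\Gamma^2_h$, with identical choices of Construction~1 vs.~2 and of which family fills the bottom and side slots), the same symmetry and dominance invariants (your $H_A(h)\ge L_B(h)$ is the paper's hypothesis~(c)), and the same resulting recurrences, Fibonacci-type for the height and characteristic polynomial $t^2-2t-1$ for the width. The only cosmetic difference is that you run the width analysis on the half-widths $L_A$, whereas the paper couples $\omega^1_h$ with $\eta^2_h$ and eliminates to get $\eta^2_h=2\eta^2_{h-1}+\eta^2_{h-2}$; both yield the $(1+\sqrt2)^h$ bound.
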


\begin{proof}
In order to prove the theorem, we use induction on $h$. We will construct {\em two} 1-2 drawings $\Gamma^1_h$ and $\Gamma^2_h$ of $T_h$, the first one with small height and the second one with small width. 

If $h\leq 2$, then both $\Gamma^1_h$ and $\Gamma^2_h$ coincide with the unique 1-2 drawing of $T_h$. Suppose now that $\Gamma^1_{h-1}$ and $\Gamma^2_{h-1}$ have been defined, for some $h\geq 3$. We show how to construct $\Gamma^1_{h}$ and $\Gamma^2_{h}$.

We construct $\Gamma^1_h$ as follows. Let $\Gamma^a_{h-1}=\Gamma^1_{h-1}$ (this is the drawing of the subtree of $r(T_h)$ whose root is on the same vertical line as $r(T_h)$) and let $\Gamma^b_{h-1}=\Gamma^c_{h-1}=\Gamma^2_{h-1}$ (these are the drawings of the subtrees of $r(T_h)$ whose roots are on the same horizontal line as $r(T_h)$); recall that $\Gamma^b_{h-1}$ and $\Gamma^c_{h-1}$ are rotated by $90\degree$, respectively clockwise and counterclockwise, in $\Gamma^1_h$. We arrange such drawings together by means of Construction~1. Analogously, in order to construct $\Gamma^2_h$, let  $\Gamma^a_{h-1}=\Gamma^2_{h-1}$ and let $\Gamma^b_{h-1}=\Gamma^c_{h-1}=\Gamma^1_{h-1}$. We arrange such drawings together by means of Construction~2. 

We now analyze the height, the width, and the area of $\Gamma^1_h$ and $\Gamma^2_h$. Denote by $\omega^1_h$, by $\eta^1_h$, by $\lambda^1_h$, and by $\rho^1_h$ (by $\omega^2_h$, by $\eta^2_h$, by $\lambda^2_h$, and by $\rho^2_h$) the width, the height, the left width, and the right width of $\Gamma^1_h$ (of $\Gamma^2_h$), respectively. We will prove that: 

\begin{enumerate}[(a)]
	\item $\eta^1_{h} = \eta^1_{h-1}+\eta^1_{h-2}+1$, for $h\geq 3$;
	\item $\lambda^2_h=\rho^2_h$, for $h\geq 1$; and
	\item $\eta^1_{h} > \lambda^2_h$, for $h\geq 1$.
\end{enumerate}

We now prove the above inductive hypotheses by induction on $h$. The base case, in which $h\leq 2$, is trivially proved. Indeed, by construction we have $\lambda^2_1=\rho^2_1=0$ and $\lambda^2_2=\rho^2_2=1$, hence inductive hypothesis (b) is verified; further, $\eta^1_{1} = 1 > 0= \lambda^2_1$ and $\eta^1_{2} = 2 > 1 = \lambda^2_2$, hence inductive hypothesis (c) is verified. Inductive hypothesis (a) is vacuous if $h\leq 2$. Assume next that $h\geq 3$.

We prove that inductive hypothesis~(a) is verified for $h$. By Property~\ref{pr:construction-1} and by inductive hypotheses~(b) and~(c) for $h-1$, we have $\eta^1_h=\lambda^2_{h-1}+\eta^1_{h-1}+1$. By inductive hypothesis~(b) for $h-1$, we have $\lambda^2_{h-1}=(\omega^2_{h-1}-1)/2$. By Property~\ref{pr:construction-2} and by inductive hypothesis~(c) for $h-2$, we have $\omega^2_{h-1}=2\eta^1_{h-2}+1$, hence $\lambda^2_{h-1}=\eta^1_{h-2}$ and $\eta^1_h=\eta^1_{h-1}+\eta^1_{h-2}+1$.   

We prove that inductive hypothesis~(b) is verified for $h$. This comes from the fact that inductive hypothesis~(b) is satisfied for $h-1$ (which ensures that the left and right width of $\Gamma^a_{h-1}$ are equal) and from the fact that $\Gamma^b_{h-1}$ and $\Gamma^c_{h-1}$ have the same height (and hence the same width once they are rotated by $90\degree$). 

We prove that inductive hypothesis~(c) is verified for $h$. By construction, we have $\lambda^2_h=\max\{\eta^1_{h-1},\lambda^2_{h-1}\}$. By inductive hypothesis~(c) for $h-1$, we have $\eta^1_{h-1}>\lambda^2_{h-1}$ and hence $\lambda^2_h=\eta^1_{h-1}$. By inductive hypothesis~(a) for $h$, we have  $\eta^1_{h}>\eta^1_{h-1}$ and hence $\eta^1_{h}>\lambda^2_h$. 

We are now ready to analyze the height $\eta^1_h$ of $\Gamma^1_h$. By inductive hypothesis~(a), we have that $\eta^1_h$ grows asymptotically as the terms of the Fibonacci sequence; it is known that the ratio between two consecutive terms of such a sequence tends to the golden ratio $\phi=(1+\sqrt 5)/2 \approx 1.618$, hence $\eta^1_h\in O(\phi^h)$. This can be formalized as follows. We prove, by induction on $h$, that $\eta^1_h\leq k \cdot \phi^h -1$, where $k$ is a constant that is large enough so that $\eta^1_h\leq k \cdot \phi^h -1$ holds true in the base case, in which $h\leq 2$. If $h\geq 3$ then, by inductive hypothesis~(a), we have $\eta^1_h=\eta^1_{h-1}+\eta^1_{h-2}+1$. By induction, the latter is smaller than or equal to $(k \cdot \phi^{h-1} -1) + (k \cdot \phi^{h-2} -1) + 1 = k \cdot (\phi^{h-1} + \phi^{h-2}) -1$. Hence, it suffices to prove that $\phi^{h-1} + \phi^{h-2}\leq \phi^h$. Dividing both sides by $\phi^{h-2}$, we get $\phi^2-\phi-1 \geq 0$. Since $(1+ \sqrt 5)/2$ is one of the solutions of the equation $\phi^2-\phi-1 = 0$, it follows that $\phi^2-\phi-1 \geq 0$ and hence $\eta^1_h\leq k \cdot \phi^h -1\in O(\phi^h)$. Since $h\in O(\log_3 n)$ we get the height bound claimed in the theorem.

The width $\omega^1_h$ of $\Gamma^1_h$ needs to be analyzed together with the height $\eta^2_h$ of $\Gamma^2_h$. In fact, by construction and by Properties~\ref{pr:construction-1} and~\ref{pr:construction-2}, we have $\omega^1_h=\omega^1_{h-1}+2\cdot \eta^2_{h-1}$ and $\eta^2_h=\omega^1_{h-1}+\eta^2_{h-1}$.	Substituting repeatedly the first equation into the second one, we get $\eta^2_h=\eta^2_{h-1}+2\cdot \eta^2_{h-2}+2\cdot \eta^2_{h-3}+\cdots+2\cdot \eta^2_1 + \omega^1_1$. In the same way, we get $\eta^2_{h-1}=\eta^2_{h-2}+2\cdot \eta^2_{h-3}+2\cdot \eta^2_{h-4}+\cdots+2\cdot \eta^2_1 + \omega^1_1$. Subtracting the latter from the former, we get  $\eta^2_h-\eta^2_{h-1}=\eta^2_{h-1}+2\cdot \eta^2_{h-2} - \eta^2_{h-2}$, that is, $\eta^2_h=2\cdot \eta^2_{h-1}+\eta^2_{h-2}$. Now assume that $\eta^2_h \leq k\cdot c^h$, for some suitable constant $k$ and $c$, where we want $c$ as small as possible, while $k$ is sufficiently large so that $\eta^2_h \leq k\cdot c^h$ holds true for small values of $h$.  Then we need to have $2\cdot (k \cdot c^{h-1})+(k \cdot c^{h-2})\leq k \cdot c^h$, hence $c^2-2c-1\geq 0$. The associated equation has solutions $c=(1\pm \sqrt 2)$, hence we have $\eta^2_h \leq k \cdot (1 + \sqrt 2)^h \in O((1 + \sqrt 2)^h)$. The width $\omega^1_h$ of $\Gamma^1_h$ has the same asymptotic value of $\eta^2_h$. For example, from $\eta^2_{h+1}=\omega^1_{h}+\eta^2_{h}$ one derives $\omega^1_{h} \leq k \cdot (1 + \sqrt 2)^{h+1} \in O((1 + \sqrt 2)^h)$. By using $h\in O(\log_3 n)$ we get the width bound claimed in the theorem. The area bound also follows.
\end{proof}

The best area upper bound we could analytically prove for 1-2 drawings of $n$-node complete ternary trees is \completeArea; this is pretty close to the $O(n^{1.118})$ bound by Ali~\cite{a-so-15} for planar straight-line orthogonal drawings that do not satisfy the subtree separation property.

\begin{figure}[htb]
	\begin{center}
		\mbox{\includegraphics[scale = 0.75]{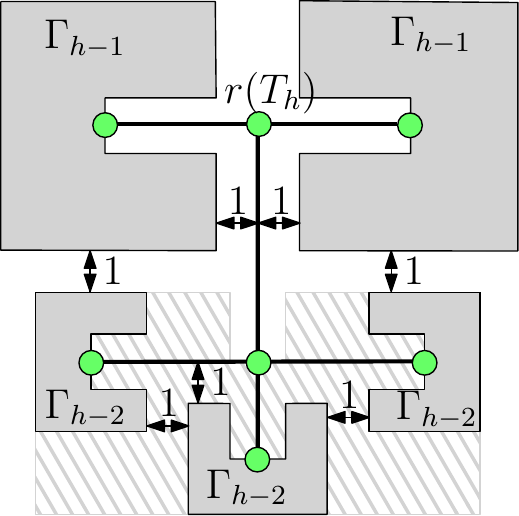}} 
		\caption{The construction of an \completeArea-area 1-2 drawing of $T_h$ satisfying the subtree separation property.} 
		\label{fig:complete}
	\end{center}
\end{figure}

\begin{theorem} \label{th:complete}
	The complete ternary tree with $n$ nodes has a 1-2 drawing in \completeArea area.
\end{theorem}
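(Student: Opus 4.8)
The plan is to give an explicit recursive construction of a 1-2 drawing $\Gamma_h$ of $T_h$ whose width times height is $O(n^{1.149})$, improving both on the $O(n^{1.262})$-area drawings obtained from a single construction and on the $O(n^{1.242})$-area drawings of Theorem~\ref{th:minimum-side-upper-bound}. Since any drawing assembled by means of Constructions~1 and~2 is by definition a 1-2 drawing, and since every 1-2 drawing satisfies the subtree separation property (as observed in Section~\ref{se:complete}), it suffices to build $\Gamma_h$ using only these two operations: planarity, the top-visibility property, and subtree separation then come for free, and only the area needs to be controlled.

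Concretely, I would unroll the recursion over a fixed number $k$ of consecutive tree levels at a time. Thus $\Gamma_h$ is obtained by drawing the top part $T_k$ of $T_h$ with a suitable pattern of Constructions~1 and~2 on its $k$ internal levels, and by plugging into each of the $3^k$ leaf positions of $T_k$ a drawing of the corresponding copy of $T_{h-k}$. The crucial design choice is which of those $3^k$ copies are drawn recursively as the balanced drawing $\Gamma_{h-k}$ and which are drawn as one of the thin drawings $\Gamma^1_{h-k}$ (height $O(\phi^{h-k})$) or $\Gamma^2_{h-k}$ (width $O(\phi^{h-k})$) from Theorem~\ref{th:minimum-side-upper-bound}: a thin drawing should be inserted only where, after the $90\degree$ rotations dictated by Constructions~1 and~2, its long side runs along a direction that is already asymptotically dominant in $\Gamma_h$, so that it contributes no new growth. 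One then reads off, via Properties~\ref{pr:construction-1} and~\ref{pr:construction-2}, constant-coefficient linear recurrences for the width $W(h)$, the height $H(h)$, and (as always with these constructions) the left width, the right width, the top height, and the bottom height of $\Gamma_h$, in which the ``thin'' terms contribute known quantities of the form $O(\phi^{h-k})$ and $O((1+\sqrt2)^{h-k})$. Solving these recurrences yields $W(h)=O(\alpha^h)$ and $H(h)=O(\beta^h)$ for explicit bases $\alpha,\beta$ depending on the chosen pattern; since $h=\Theta(\log_3 n)$, the area is then $O(n^{\log_3(\alpha\beta)})$, and the pattern (and, if useful, the constant $k$) is chosen so as to make $\log_3(\alpha\beta)\le 1.149$.

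The main obstacle is the combinatorial design step: one must exhibit a concrete arrangement --- a choice of Construction~1 vs.\ Construction~2 at every internal node of $T_k$ together with an assignment of drawing types to the $3^k$ leaves --- that simultaneously (i) is consistent with the rotation bookkeeping, so that no rotated thin drawing ever dumps its long side onto a short direction of $\Gamma_h$; (ii) creates no ``long-times-long'' cross term in the recurrences that would push the exponent above $1.149$; and (iii) yields recurrences whose dominant characteristic roots multiply to at most $3^{1.149}$. Step (iii) amounts to bounding the spectral radius of a small companion matrix and is a routine, possibly computer-assisted, calculation; the delicate part is (i)--(ii), namely checking that the specific layout is planar and a valid 1-2 drawing and that the width and height recurrences really close without a hidden inflating term. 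It remains only to settle the $O(1)$ base cases $h\le k$ and to verify that the optimization delivers exactly the exponent $1.149$ claimed in the statement, which, as noted in the surrounding text, stays close to Ali's bound $O(n^{1.118})$ for drawings without the subtree separation property.
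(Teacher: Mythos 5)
There is a genuine gap, and part of the plan is actually a dead end. As written, your text is a research programme, not a proof: the entire substance of the theorem --- the concrete pattern of Constructions~1 and~2, the resulting recurrences for width/height/left width/right width, their solution, and the verification that the exponent comes out at most $1.149$ --- is deferred to a ``combinatorial design step'' that you explicitly leave open. Nothing is constructed and no bound is derived, so the statement is not proved.

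Moreover, the central design idea --- plugging the thin drawings $\Gamma^1_{h-k}$ or $\Gamma^2_{h-k}$ of Theorem~\ref{th:minimum-side-upper-bound} into leaf positions at constant depth $k$ --- cannot reach \completeArea. Both of those drawings have their long side of length $\Theta\bigl((1+\sqrt 2)^{h-k}\bigr)=\Theta\bigl(n^{\log_3 (1+\sqrt 2)}\bigr)=\Theta(n^{0.803})$ for fixed $k$, so any drawing of $T_h$ containing one of them has a side of length $\Omega(n^{0.803})$; by the lower bound of Theorem~\ref{th:minimum-side} the other side is $\Omega(n^{\log_3\phi})=\Omega(n^{0.438})$, so the area is $\Omega(n^{1.24})$, already above the target --- and this holds whether or not the long side ``runs along an already dominant direction'' (if that direction is not already that long, the insertion itself makes it so). The paper avoids hybrids entirely and uses a single, self-referential balanced family: $\Gamma_h$ is obtained by Construction~2 from two rotated copies of $\Gamma_{h-1}$ (the left and right subtrees of the root) and a drawing of the bottom subtree that is itself assembled from three copies of $\Gamma_{h-2}$ via Construction~1. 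After showing $\lambda_h=\rho_h$, Properties~\ref{pr:construction-1} and~\ref{pr:construction-2} give $\omega_h=\max\{2\eta_{h-1}+1,\ \omega_{h-2}+2\eta_{h-2}\}$ and $\eta_h=\omega_{h-1}+\max\{\omega_{h-2},\ (\omega_{h-2}+1)/2+\eta_{h-2}\}$, which are solved by the induction $\omega_h\leq k\cdot c^h-1$, $\eta_h\leq \alpha\cdot k\cdot c^h-1$ with $\alpha=0.9397$ and $c=2\alpha=1.8794$; hence both sides are $O(n^{\log_3 1.8794})\subseteq O(n^{0.5744})$ and the area is \completeArea. If you want to salvage your approach, drop the thin drawings and produce an explicit pattern whose width and height recurrences both solve to $O(3^{0.5744\,h})$ --- i.e., a drawing balanced in both dimensions, as in the paper.
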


\begin{proof} 
	We show how to construct a 1-2 drawing $\Gamma_h$ of $T_h$ in \completeArea area. If $h\leq 2$, then $\Gamma_h$ coincides with the unique 1-2 drawing of $T_h$. Suppose now that $h\geq 3$. We construct $\Gamma_h$ as follows (refer to Fig.~\ref{fig:complete}). We construct $\Gamma^a_{h-1}$ (this is the drawing of the subtree of $r(T_h)$ whose root is on the same vertical line as $r(T_h)$) out of three copies of $\Gamma_{h-2}$ arranged by means of Construction~1. Further, let $\Gamma^b_{h-1}=\Gamma^c_{h-1}=\Gamma_{h-1}$ (these are the drawings of the subtrees of $r(T_h)$ whose roots are on the same horizontal line as $r(T_h)$); recall that $\Gamma^b_{h-1}$ and $\Gamma^c_{h-1}$ are rotated by $90\degree$, respectively clockwise and counterclockwise, in $\Gamma_h$. We arrange such drawings together by means of Construction~2. 
	 
	We now establish upper bounds on the width $\omega_h$ and on the height $\eta_h$ of $\Gamma_h$. In order to do that, we exploit that the left width $\lambda_h$ of $\Gamma_h$ is equal to the right width $\rho_h$. This is easily proved by induction. Indeed, in the base case, we have $\lambda_1=\rho_1=0$ and $\lambda_2=\rho_2=1$. Further, if $h\geq 3$, then $\lambda_h=\max\{\eta_{h-1},\omega_{h-2}+\lambda_{h-2}\}$ and $\rho_h=\max\{\eta_{h-1},\omega_{h-2}+\rho_{h-2}\}$, hence $\lambda_h=\rho_h$ follows by $\lambda_{h-2}=\rho_{h-2}$. 
	
	We are now ready to analyze the asymptotic behavior of $\omega_h$ and $\eta_h$. In the base case, we have $\omega_1=\eta_1=1$, $\omega_2=3$, and $\eta_2=2$. By Properties~\ref{pr:construction-1} and~\ref{pr:construction-2} and since $\lambda_h=\rho_h$, we have:
	\begin{eqnarray} \label{eq:w-complete}
	\omega_h= \max\{2\eta_{h-1}+1, \omega_{h-2} + 2\eta_{h-2}\}. 
	\end{eqnarray} 
	
	Again by Properties~\ref{pr:construction-1} and~\ref{pr:construction-2} and since $\lambda_{h-1}=\rho_{h-1}$ and $\lambda_{h-2}=\rho_{h-2}$, we have:
	\begin{eqnarray} \label{eq:h-complete}
	\eta_{h}= \omega_{h-1} + \max\{\omega_{h-2}, (\omega_{h-2}+1)/2 + \eta_{h-2}\}.
	\end{eqnarray} 
	
	We now inductively prove that $\omega_{h}\leq k\cdot c^{h} -1$ and $\eta_{h}\leq \alpha\cdot k\cdot c^{h} -1$, for any $h=1,2,\dots$, where $\alpha$, $k$, and $c$ are suitable constants (to be determined) such that $1/2<\alpha<1$, $k>1$, and $1<c<2$; in particular, we would like $c$ to be as small as possible. If $h=1$ or $h=2$, then, for any constants $c>1$ and $1/2<\alpha<1$, a constant $k$ can be chosen large enough so that $\omega_{h}\leq k\cdot c^{h} -1$ and $\eta_{h}\leq \alpha\cdot k\cdot c^{h} -1$. Indeed, it suffices to choose $k\geq 4/(\alpha\cdot c)$ in order to have $k\cdot c^{h} -1> \alpha\cdot k\cdot c^{h} -1 \geq 3\geq \omega_{1},\eta_{1},\omega_{2},\eta_{2}$.
	
	For the inductive case, assume that $\omega_{h'}\leq k\cdot c^{h'} -1$ and $\eta_{h'}\leq \alpha\cdot k\cdot c^{h'} -1$, for every integer $h'<h$; we prove the same inequalities for $h$. 
	
	By applying induction in equation~(\ref{eq:h-complete}), we get $\eta_{h}\leq  k\cdot c^{h-1} -1 + \max\{k\cdot c^{h-2} -1, (k\cdot c^{h-2} -1 +1)/2 + \alpha\cdot k\cdot c^{h-2} -1 \}\leq k\cdot c^{h-1} -1 + k\cdot c^{h-2}\cdot \max\{1, \alpha + 1/2 \}\leq k\cdot c^{h-1} -1 + (\alpha + 1/2)\cdot k\cdot c^{h-2}$, where we exploited $\alpha>1/2$. Hence, we want $k\cdot c^{h-1} -1 + (\alpha + 1/2)\cdot k\cdot c^{h-2} \leq \alpha\cdot k\cdot c^{h} -1$, that is
	\begin{eqnarray} \label{eq:c-complete-bound1}
	\alpha\cdot c^2-c-(\alpha + 1/2)\geq 0.
	\end{eqnarray}
	
	In order to establish $\omega_{h}\leq k\cdot c^{h} -1$ we distinguish two cases, based on which term determines the maximum in equation~(\ref{eq:w-complete}). 
	
	If $\omega_{h-2} + 2\eta_{h-2} \geq 2\eta_{h-1}+1$, then $\omega_{h}= \omega_{h-2} + 2\eta_{h-2}$. By applying induction, we get $	\omega_{h}\leq k\cdot c^{h-2} -1 + 2(\alpha \cdot k\cdot c^{h-2} -1)=(2\alpha+1)\cdot k\cdot c^{h-2}-3$. Hence, we want $(2\alpha+1)\cdot k\cdot c^{h-2}-3 \leq k\cdot c^{h} -1$, which is true as long as $c^2\geq 2\alpha+1$, hence
	\begin{eqnarray} \label{eq:c-complete-bound2}
	c\geq \sqrt{2\alpha + 1}.
	\end{eqnarray}
	
	If $2\eta_{h-1}+1 > \omega_{h-2} + 2\eta_{h-2}$, then $\omega_{h}= 2\eta_{h-1}+1$. By applying induction, we get $\omega_{h}\leq  2(\alpha \cdot k\cdot c^{h-1} -1) + 1 =2\alpha \cdot k\cdot c^{h-1}-1.$ Hence, we want $2\alpha \cdot k\cdot c^{h-1}-1 \leq k\cdot c^{h} -1$, which is true as long as 
	\begin{eqnarray} \label{eq:c-complete-bound3}
	c\geq 2\alpha.
	\end{eqnarray}
	
	Now pick $c=2\alpha$, thus satisfying inequality (\ref{eq:c-complete-bound3}). Substituting $c=2\alpha$ in inequality (\ref{eq:c-complete-bound1}), we want $4\alpha^3 -3\alpha-1/2\geq 0$, which is true if $\alpha\geq 0.9397$. So take $\alpha= 0.9397$, implying that inequality (\ref{eq:c-complete-bound1}) is satisfied and note that $c=2\alpha = 1.8794>1.6969>\sqrt{2\alpha+1}$, hence inequality (\ref{eq:c-complete-bound2}) is satisfied as well. This completes the induction and hence proves that $\omega_{h},\eta_{h} \in O(1.8794^h)$. 
	
	Since $h \in O(\log_3 n)$, we have $\omega_{h},\eta_{h} \in O(n^{\log_3 1.8794})\in O(n^{0.5744})$. The theorem follows.
\end{proof}

\subsection{1-2 Drawings vs Drawings with the Subtree Separation Property} \label{se:12-optimal}

In this section we show that 1-2 drawings require minimum area among the planar straight-line orthogonal drawings satisfying the subtree separation property. This is proved in the following lemma. 

\begin{lemma} \label{le:separation}
	Suppose that the complete ternary tree $T_h$ has a planar straight-line orthogonal drawing $\Gamma$ with the subtree separation property, with left width $\lambda$, with right width $\rho$, with height $\eta$, and such that, if $h\geq 2$, then the three children of $r(T_h)$ are to the left, below, and to the right of $r(T_h)$.  
	
	Then there is a 1-2 drawing $\Gamma'$ of $T_h$ with the following properties:
	\begin{itemize}
		\item the left and right widths of $\Gamma'$ are both equal to a value $\mu\leq \min\{\lambda,\rho\}$;
		\item the height of $\Gamma'$ is at most $\eta$;  and
		\item if $h\geq 2$, then let $L$, $B$, and $R$ be the subtrees of $r(T_h)$ rooted at the children of  $r(T_h)$ to the left, below, and to the right of $r(T_h)$, respectively; then the 1-2 drawings of $L$ and $R$ in $\Gamma'$ are congruent, up to a rotation of $180\degree$.     
	\end{itemize}
\end{lemma}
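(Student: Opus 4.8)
The plan is to prove the lemma by induction on $h$. In the base case $h=1$ the drawing $\Gamma$ is a single point, so $\lambda=\rho=0$ and $\eta=1$, and the unique 1-2 drawing of $T_1$ works with $\mu=0$; the third item is vacuous. For $h\ge 2$, write $r=r(T_h)$ and let $L$, $B$, $R$ be the subtrees of the children of $r$ lying to the left, below, and to the right of $r$, respectively. First I would extract from $\Gamma$ the restrictions $\Gamma[L]$, $\Gamma[B]$, $\Gamma[R]$; each is a planar straight-line orthogonal drawing of $T_{h-1}$ and inherits the subtree-separation property, and (being node-disjoint subtrees of $T_h$) their bounding boxes are pairwise non-overlapping. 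Since $\Gamma$ is orthogonal, at every internal node the four incident edges point in the four distinct axis directions; hence at the root of each of $\Gamma[L]$, $\Gamma[B]$, $\Gamma[R]$ the edge toward $r$ occupies exactly one direction and the three child-edges the other three. Consequently, after rotating $\Gamma[L]$ by $90\degree$ counterclockwise and $\Gamma[R]$ by $90\degree$ clockwise (and leaving $\Gamma[B]$ unchanged, since there the edge toward $r$ already points upward), each becomes a drawing of $T_{h-1}$ satisfying the hypotheses of the lemma. Applying the inductive hypothesis to the three rotated drawings yields 1-2 drawings $D_L$, $D_B$, $D_R$ of $T_{h-1}$, each with equal left and right widths, with height at most the height of the corresponding rotated sub-drawing, and with left/right width at most the minimum of the left and right widths of that rotated sub-drawing.

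Next I would reassemble, combining two of $D_L$, $D_B$, $D_R$ by means of Construction~2 (or, in some sub-cases, Construction~1), placing one of them in the slot $\Gamma^a_{h-1}$ and the \emph{same} drawing in both slots $\Gamma^b_{h-1}$ and $\Gamma^c_{h-1}$. Two properties then come for free. First, because $\Gamma^b_{h-1}$ is rotated clockwise and $\Gamma^c_{h-1}$ counterclockwise by $90\degree$, using the same drawing in both slots makes the resulting drawings of $L$ and $R$ differ by a rotation of $180\degree$, which is exactly the third item. Second, since $\Gamma^a_{h-1}$ has equal left and right widths by induction and $\Gamma^b_{h-1}=\Gamma^c_{h-1}$, the reassembled drawing has equal left and right widths. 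By Property~\ref{pr:construction-2}, writing $P$ for the drawing in slot $\Gamma^a_{h-1}$, $Q$ for the one in slots $\Gamma^b_{h-1},\Gamma^c_{h-1}$, and $\mu(\cdot),\eta(\cdot)$ for their left(=right) width and height, the reassembled drawing has left (= right) width $\max\{\mu(P),\eta(Q)\}$ and height $2\mu(Q)+\eta(P)+1$; by Property~\ref{pr:construction-1}, Construction~1 instead gives left/right width $\mu(P)+\eta(Q)$ and height $\mu(Q)+\max\{\mu(Q),\eta(P)\}+1$. So Construction~2 is better for width (a maximum rather than a sum) and Construction~1 is better for height, and the choice will be dictated by which constraint is tight.

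It then remains to pick the reflection of $\Gamma$ (reflecting left–right swaps $\lambda\leftrightarrow\rho$, hence leaves $\min\{\lambda,\rho\}$ and $\eta$ unchanged), the construction, and the assignment of $D_L,D_B,D_R$ to the slots so that the left/right width of the result is at most $\min\{\lambda,\rho\}$ and its height at most $\eta$. The intended route is: first establish a geometric normal form. The column through $r$ is blocked in a neighborhood of $r$ by $r$ and the edge to its bottom child, and symmetrically the row through $r$ is blocked near $r$; so a sub-drawing reaching the far side of that column (or row) must ``wrap around,'' and a short case analysis on pairwise non-overlap of the three bounding boxes then shows that, up to a left–right reflection, the box of $\Gamma[R]$ lies in the closed half-plane to the right of $r$ and the box of $\Gamma[B]$ lies in the closed half-plane below $r$. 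From this normal form one reads off the needed inequalities: the left/right widths of $\Gamma[B]$ are at most $\lambda$ and $\rho$ (so $\mu(D_B)\le\min\{\lambda,\rho\}$); the heights of $\Gamma[L],\Gamma[R],\Gamma[B]$ are at most $\eta$; and the width of $\Gamma[R]$, which after rotation becomes $\eta(D_R)$, is at most $\min\{\lambda,\rho\}$. Feeding these into the two displayed formulas — typically with $D_B$ in slot $\Gamma^a_{h-1}$ and $D_R$ in slots $\Gamma^b_{h-1},\Gamma^c_{h-1}$, switching construction or reassigning the drawings in the remaining cases — finishes the inductive step.

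The main obstacle I expect is precisely this last step: rotating $\Gamma[L]$ and $\Gamma[R]$ into standard position interchanges their horizontal and vertical extents, so that $\eta(D_L),\eta(D_R)$ are controlled by the \emph{widths} of $\Gamma[L],\Gamma[R]$ in $\Gamma$, which must in turn be bounded by $\lambda$ and $\rho$ — and it is only the subtree-separation property that forces those widths to be small. Making the geometric normal form sharp enough (including the delicate boundary case in which $\Gamma[R]$ actually touches the column of $r$, where one must argue that $\Gamma$ pays a compensating amount of height) is where the real work lies, and is the part of the argument that needs the most care.
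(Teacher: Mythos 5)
Your overall strategy --- induct on $h$, rotate the drawings of $L$ and $R$ into standard position, apply the inductive hypothesis, put the \emph{same} inductively obtained drawing into both rotated slots to force the $180\degree$ congruence, and reassemble via Construction~1 or~2 --- is the same as the paper's. But the step you yourself flag as the main obstacle is where the proof actually lives, and as written it has a genuine gap. The inequality you ``read off'' from your normal form, namely that the width of $\Gamma[R]$ is at most $\min\{\lambda,\rho\}$, is false in general: after normalizing so that the box of $R$ lies to the right of the column through $r(T_h)$, that width is only bounded by the width of $\Gamma$ on the side where $R$ lies, which may well be the \emph{larger} of $\lambda$ and $\rho$ (the reflection swaps $\lambda$ and $\rho$, so it cannot buy you the minimum). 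This is exactly the quantity you need: with $D_B$ in slot $\Gamma^a_{h-1}$ and $D_R$ in the two rotated slots, Construction~2 gives left and right width $\max\{\mu(D_B),\eta(D_R)\}$, and $\eta(D_R)$ is controlled precisely by the width of $\Gamma[R]$. The paper's case analysis exists to handle this: in its Cases~1 and~2 the drawing duplicated into the left and right positions is the \emph{narrower} of the two inductively obtained drawings $\Gamma'_L,\Gamma'_R$, and in its Case~3 (horizontal separation between $L$ and $B$ but vertical separation between $R$ and $B$) even that is unsafe --- a rotated copy of $\Gamma'_R$ aligned at the rightmost column of $\Gamma'_L$ could cross $\Gamma'_B$ --- so the comparison is made against the right width of $\Gamma$ instead. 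None of this can be recovered from the three inequalities you list.

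A second, related gap is the height bound. When $B$ is separated from $L$ and from $R$ by \emph{vertical} lines (all three subtrees drawn tall and side by side), $\eta$ is not the sum of the vertical extents of $B$ and of $L$/$R$, so your default Construction~2 assembly, whose height is $2\mu(D_R)+\eta(D_B)+1$, can exceed $\eta$; in that configuration one must switch to Construction~1, whose height is max-like while its width is sum-like (acceptable there because the width of $\Gamma$ is itself sum-like). You do write ``switching construction or reassigning the drawings in the remaining cases,'' but the criterion for switching (whether the separating lines between $L,B$ and between $R,B$ are horizontal or vertical, as in the paper's four cases) and the verification of both bounds in each case are the core of the lemma, and they are missing. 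Your geometric normal form is plausible and provable, and the bound $\mu(D_B)\le\min\{\lambda,\rho\}$ is fine, but even granting the normal form the stated inequalities do not suffice to complete the inductive step.
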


\begin{proof}
	We prove the statement by induction on $h$. The statement is trivially true if $h=1$, as the only (up to translations) planar straight-line orthogonal drawing of $T_1$ is a 1-2 drawing. 
	
Assume next that $h\geq 2$. We perform a sequence of modifications to $\Gamma$, eventually resulting in a 1-2 drawing $\Gamma'$ as required by the statement.  
	
First, we replace the drawings $\Gamma_L$, $\Gamma_B$, and $\Gamma_R$ of $L$, $B$, and $R$ in $\Gamma$ by 1-2 drawings of $L$, $B$, and $R$, respectively; refer to Fig.~\ref{fig:replacement}(a). Construct a copy $\Lambda_L$ of $\Gamma_L$ and rotate it counterclockwise by $90\degree$; the children of $r(L)$, if any, are to the left, below, and to the right of $r(L)$ in $\Lambda_L$. Inductively construct a 1-2 drawing $\Gamma'_L$ of $L$ such that the left and right widths of $\Gamma'_L$ are both equal to a value $\nu$ which is smaller than or equal to the minimum between the left and right widths of $\Lambda_L$, and such that the height of $\Gamma'_L$ is smaller than or equal to the height of $\Lambda_L$. Rotate $\Gamma'_L$ clockwise by $90\degree$. We now replace $\Gamma_L$ by $\Gamma'_L$ so that the rightmost vertical line intersecting $\Gamma'_L$ coincides with the rightmost vertical line intersecting $\Gamma_L$ and so that $r(L)$ is on the horizontal line through $r(T_h)$. This replacement does not increase (and possibly decreases) the left width and the height of $\Gamma$. The right width of $\Gamma$ is not altered by the modification. Note that, after the rotation and the replacement, the top and bottom heights of $\Gamma'_L$ are equal.
The replacements of $\Gamma_B$ and $\Gamma_R$ by 1-2 drawings $\Gamma'_B$ and $\Gamma'_R$ are performed similarly (actually the replacement of $\Gamma_B$ by $\Gamma'_B$ does not require any rotations).
	
	\begin{figure}[tb]
		\begin{center}
			\begin{tabular}{c c c}
				\mbox{\includegraphics[scale = 0.9]{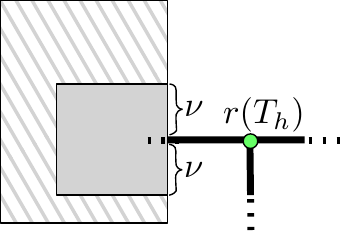}} & \hspace{4mm}
				\mbox{\includegraphics[scale = 0.9]{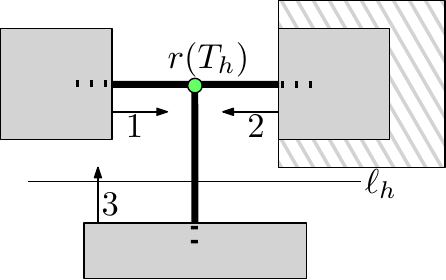}} & \hspace{4mm}
				\mbox{\includegraphics[scale = 0.9]{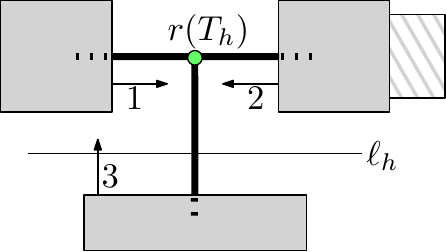}} \\
				(a) & \hspace{4mm}(b) & \hspace{4mm}(c)	
			\end{tabular}
			\caption{(a) Replacement of the drawing $\Gamma_L$, which is represented by a shaded rectangle, by a (rotated) 1-2 drawing $\Gamma'_L$, which is represented by a filled rectangle. (b)-(c) Modifications to $\Gamma$ in Case~1, assuming that the width of $\Gamma'_L$ is smaller than or equal to the width of $\Gamma'_R$. The drawings $\Gamma'_R$ and $\Gamma''_R$ are represented by a shaded and a filled rectangle, respectively. The case in which the height of $\Gamma'_L$ is smaller than the height of $\Gamma'_R$ and the case in which the height of $\Gamma'_L$ is larger than or equal to the height of $\Gamma'_R$ are represented in (b) and (c), respectively. The arrows show the final translations of the drawings of $L$, $R$, and $B$.} 
			\label{fig:replacement}
		\end{center}
	\end{figure}
	
	Since $\Gamma$ (even after the above modification) satisfies the subtree separation property and since $r(L)$ and $r(B)$ are to the left and below $r(T_h)$, respectively, there is a horizontal line (not necessarily a grid row) having $\Gamma'_L$ above and $\Gamma'_B$ below, or there is a vertical line (not necessarily a grid column) having $\Gamma'_L$ to the left and $\Gamma'_B$ to the right. Possibly both such lines exist. Analogously, there is a horizontal line having $\Gamma'_R$ above and $\Gamma'_B$ below, or there is a vertical line having $\Gamma'_R$ to the right and $\Gamma'_B$ to the left. We distinguish four cases.
	
	{\em Case 1: There is a horizontal line $\ell_h$ having $\Gamma'_L$ and $\Gamma'_R$ above and $\Gamma'_B$ below.} We first make the drawings of $L$ and $R$ congruent, up to a rotation of $180\degree$. This is done as follows (refer to Figs.~\ref{fig:replacement}(b) and~\ref{fig:replacement}(c)). Assume that the width of $\Gamma'_L$ is smaller than or equal to the width of $\Gamma'_R$; the case in which the width of $\Gamma'_R$ is smaller than the width of $\Gamma'_L$ is dealt with symmetrically. We construct a copy of $\Gamma'_L$ and rotate it by $180\degree$; denote by $\Gamma''_R$ the resulting drawing. We replace $\Gamma'_R$ by $\Gamma''_R$, so that so that the leftmost vertical line intersecting $\Gamma''_R$ coincides with the leftmost vertical line intersecting $\Gamma'_R$ and so that $r(R)$ is on the horizontal line through $r(T_h)$. Because of the assumption on the widths of $\Gamma'_L$ and $\Gamma'_R$, this modification does not increase the width of $\Gamma$. Further, since the top and bottom heights of $\Gamma'_L$ are equal and the same is true for the top  and bottom heights of $\Gamma'_R$, the modification does not increase the height of $\Gamma$. In particular, if the height of $\Gamma'_L$ is smaller than the height of $\Gamma'_R$, as in Fig.~\ref{fig:replacement}(b), then the height of $\Gamma$ decreases, while if the height of $\Gamma'_L$ is larger than or equal to the height of $\Gamma'_R$, as in Fig.~\ref{fig:replacement}(c), then the height of $\Gamma$ stays unchanged.  
	
	Finally, we act on the distances between the drawings of $L$, $R$, and $B$. We translate the drawing of $L$ rightwards, so that the rightmost vertical line intersecting it is one unit to the left of $r(T_h)$, we translate the drawing of $R$ leftwards, so that the leftmost vertical line intersecting it is one unit to the right of $r(T_h)$, and we translate the drawing of $B$ upwards, so that the topmost horizontal line intersecting it is one unit below the bottommost horizontal line intersecting the drawing of $L$. This results in a 1-2 drawing $\Gamma'$ with the required properties; note that $\Gamma'$ arranges 1-2 drawings of $L$, $B$, and $R$ as in Construction~2.
	
	{\em Case 2: There are a vertical line $\ell_v$ having $\Gamma'_L$ to the left and $\Gamma'_B$ to the right and a vertical line $\ell'_v$ having $\Gamma'_B$ to the left and $\Gamma'_R$ to the right.} Refer to Fig.~\ref{fig:replacement-2}(a). We first make the drawings of $L$ and $R$ congruent, up to a rotation of $180\degree$, as in Case~1. We now translate the drawing of $B$ upwards, so that the topmost horizontal line intersecting it is one unit below $r(T_h)$, we then translate the drawing of $L$ rightwards, so that the rightmost vertical line intersecting it is one unit to the left of the leftmost vertical line intersecting the drawing of $B$, and we finally translate the drawing of $R$ leftwards, so that the leftmost vertical line intersecting it is one unit to the right of the rightmost vertical line intersecting the drawing of $B$. This results in a 1-2 drawing $\Gamma'$ with the required properties; note that $\Gamma'$ arranges 1-2 drawings of $L$, $B$, and $R$ as in Construction~1.
	
	\begin{figure}[tb]
		\begin{center}
			\begin{tabular}{c c c}
				\mbox{\includegraphics[scale = 0.9]{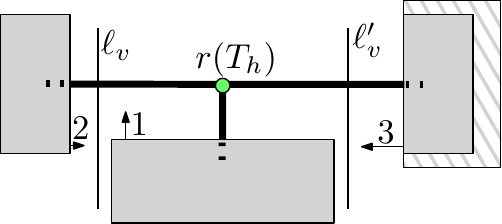}} & \hspace{4mm}
				\mbox{\includegraphics[scale = 0.9]{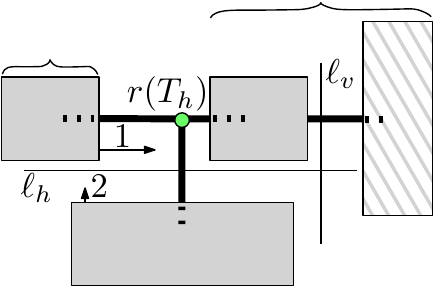}} & \hspace{4mm}
				\mbox{\includegraphics[scale = 0.9]{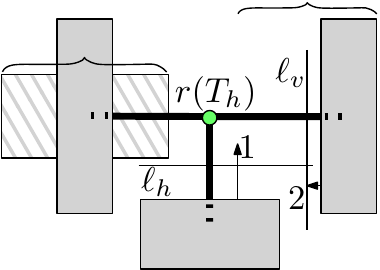}} \\
				(a) & \hspace{4mm}(b) & \hspace{4mm}(c)	
			\end{tabular}
			\caption{(a) Modifications to $\Gamma$ in Case 2, assuming that the width of $\Gamma'_L$ is smaller than or equal to the width of $\Gamma'_R$. (b) Modifications to $\Gamma$ in Case 3, assuming that the width of $\Gamma'_L$ is smaller than or equal to the right width of $\Gamma$. (c) Modifications to $\Gamma$ in Case 3, assuming that the width of $\Gamma'_L$ is larger than the right width of $\Gamma$.} 
			\label{fig:replacement-2}
		\end{center}
	\end{figure}
	
	{\em Case 3: There are a horizontal line $\ell_h$ having $\Gamma'_L$ above and $\Gamma'_B$ below and a vertical line $\ell_v$ having $\Gamma'_R$ to the right and $\Gamma'_B$ to the left.} In order to replace $\Gamma'_L$ or $\Gamma'_R$ with a rotated copy of the other one, we need to follow a different strategy than in Cases~1 and~2. In fact, rather than comparing the width of $\Gamma'_L$ with the width of $\Gamma'_R$, as we did in Cases~1 and~2, we now compare the width of $\Gamma'_L$ with the right width of $\Gamma$. The reason for this is that, if the width of $\Gamma'_R$ is smaller than the width of $\Gamma'_L$, we cannot just replace $\Gamma'_L$ with a rotated copy $\Gamma''_L$ of $\Gamma'_R$ so that the rightmost vertical lines intersecting $\Gamma'_L$ and $\Gamma''_L$ coincide, as this might create intersections between $\Gamma''_L$ and $\Gamma'_B$. 
	
	If the width of $\Gamma'_L$ is smaller than or equal to the right width of $\Gamma$, as in Fig.~\ref{fig:replacement-2}(b), then we construct a copy of $\Gamma'_L$ and rotate it by $180\degree$; denote by $\Gamma''_R$ the resulting drawing. We replace $\Gamma'_R$ by $\Gamma''_R$, so that the leftmost vertical line intersecting $\Gamma''_R$ is one unit to the right of $r(T_h)$ and so that $r(R)$ is on the horizontal line through $r(T_h)$. Because of the assumption on the width of $\Gamma'_L$ and on the right width of $\Gamma$, this modification does not increase the width of $\Gamma$. Note that $\Gamma''_R$ does not cross $\Gamma'_B$ since $\ell_h$ has $\Gamma'_L$ above and $\Gamma'_B$ below and since the top and bottom heights of $\Gamma'_L$ are equal. The construction of $\Gamma'$ is completed by translating the drawings of $L$ and $B$ as in~Case~1. 
	
	If the width of $\Gamma'_L$ is larger than the right width of $\Gamma$, as in Fig.~\ref{fig:replacement-2}(c), then we construct a copy of $\Gamma'_R$ and rotate it by $180\degree$; denote by $\Gamma''_L$ the resulting drawing. We replace $\Gamma'_L$ by $\Gamma''_L$, so that the rightmost vertical line intersecting $\Gamma''_L$ is one unit to the left of the leftmost vertical line intersecting $\Gamma'_B$ and so that $r(L)$ is on the horizontal line through $r(T_h)$. Because of the assumption on the width of $\Gamma'_L$ and on the right width of $\Gamma$, and since the left width and the right width of $\Gamma'_B$ are equal, we have that this modification does not increase the width of $\Gamma$. The construction of $\Gamma'$ is completed by translating the drawings of $B$ and $R$ as in Case~2. 
	
	{\em Case 4: There are a horizontal line $\ell_h$ having $\Gamma'_R$ above and $\Gamma'_B$ below and a vertical line $\ell_v$ having $\Gamma'_L$ to the left and $\Gamma'_B$ to the right.} This case can be discussed symmetrically to Case 3. This concludes the proof of the lemma.
\end{proof}

Lemma~\ref{le:separation} immediately implies the following.

\begin{theorem} \label{th:separation}
	For any positive integer $h$, there is a 1-2 drawing of the complete ternary tree $T_h$ with height $h$ achieving minimum area among all the planar straight-line orthogonal drawings of $T_h$ satisfying the subtree separation property.  
\end{theorem}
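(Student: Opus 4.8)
The plan is to derive the statement from Lemma~\ref{le:separation} by a simple rotation argument. First I would let $\Gamma$ be a planar straight-line orthogonal drawing of $T_h$ that satisfies the subtree separation property and whose area is minimum among all such drawings; such a $\Gamma$ exists since every drawing on the integer grid has positive integer area. The goal is to produce a 1-2 drawing of $T_h$ with area at most that of $\Gamma$; as every 1-2 drawing satisfies the subtree separation property (noted right after the definition of 1-2 drawings), any such 1-2 drawing will automatically be minimum-area, which is what the theorem asks for.

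If $h=1$ there is nothing to prove, as the unique planar straight-line orthogonal drawing of $T_1$ (up to translation) is already a 1-2 drawing. So I would assume $h\ge 2$. Here the key remark is that, since $T_h$ is complete ternary with $h\ge 2$, the root $r(T_h)$ has exactly three children, and the three edges incident to $r(T_h)$ leave it in three pairwise distinct axis directions --- two of these edges cannot leave $r(T_h)$ in the same direction, or they would overlap near $r(T_h)$ and the drawing would not be planar. Consequently exactly one of the four axis directions at $r(T_h)$ is unused, and I would rotate $\Gamma$ by the appropriate multiple of $90\degree$ so that this unused direction points upward. After the rotation the three children of $r(T_h)$ lie to the left, below, and to the right of $r(T_h)$, which is precisely the hypothesis required by Lemma~\ref{le:separation}. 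Rotating a planar straight-line orthogonal drawing produces another such drawing that still has the subtree separation property, and it merely interchanges width and height, so the area is unchanged; thus the rotated drawing is still minimum-area among drawings with the subtree separation property.

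Next I would invoke Lemma~\ref{le:separation} on the rotated drawing, obtaining a 1-2 drawing $\Gamma'$ of $T_h$ whose left and right widths both equal some value $\mu\le\min\{\lambda,\rho\}$ and whose height is at most $\eta$, where $\lambda$, $\rho$, $\eta$ are the left width, right width, and height of the rotated $\Gamma$. Since the width of a drawing equals its left width plus its right width plus one, the width of $\Gamma'$ is $2\mu+1\le\lambda+\rho+1$, i.e., at most the width of $\Gamma$, while the height of $\Gamma'$ is at most $\eta$. Hence the area of $\Gamma'$ does not exceed that of $\Gamma$. Because $\Gamma$ has minimum area among drawings with the subtree separation property and $\Gamma'$ itself has the subtree separation property, $\Gamma'$ must also have minimum area, and the theorem follows. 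I do not expect a genuine obstacle in this argument: the only thing requiring a moment's care is the verification that the rotation brings us exactly into the hypothesis of Lemma~\ref{le:separation} and leaves both the area and the subtree separation property intact; the remaining steps are routine bookkeeping with the left-/right-width identities.
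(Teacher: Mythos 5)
Your proposal is correct and follows essentially the same route as the paper: rotate the drawing so that the three children of $r(T_h)$ lie to the left, below, and to the right of the root, then apply Lemma~\ref{le:separation} and use the fact that 1-2 drawings satisfy the subtree separation property. The only difference is that you spell out details the paper leaves implicit (why such a rotation exists, and the width bookkeeping $2\mu+1\le\lambda+\rho+1$), which is fine.
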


\begin{proof}
Consider any planar straight-line orthogonal drawing $\Gamma$ of $T_h$ satisfying the subtree separation property. If the children of $r(T_h)$ are to the left, below, and to the right of $r(T_h)$ in $\Gamma$, then Lemma~\ref{le:separation} can be applied in order to construct a 1-2 drawing $\Gamma'$ of $T_h$ whose width and height are at most equal to the width and height of $\Gamma$, respectively. If the children of $r(T_h)$ are not to the left, below, and to the right of $r(T_h)$ in $\Gamma$, then $\Gamma$ can be rotated clockwise by $90\degree$, $180\degree$, or $270\degree$ so that they are, and then Lemma~\ref{le:separation} can be applied in order to construct a 1-2 drawing $\Gamma'$ of $T_h$ whose width and height are at most equal to the width and height of the rotated drawing $\Gamma$.
\end{proof}

\subsection{Computing Minimum-Area 1-2 Drawings} \label{se:experimental}

Theorem~\ref{th:separation} motivates the study of 1-2 drawings. In this section we perform an experimental evaluation of the area requirements of 1-2 drawings. This is mainly possible due to the following theorem.

\begin{theorem} \label{th:minimum-area}
A minimum-area 1-2 drawing of a complete ternary tree can be computed in polynomial time.
\end{theorem}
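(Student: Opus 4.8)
The plan is to turn the recursive definition of 1-2 drawings into a bottom-up dynamic program over the tree height~$h$, whose states capture the few integer parameters through which a 1-2 drawing interacts with Constructions~1 and~2.

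\textbf{A finite-state description of 1-2 drawings.} The first step is to observe that, as far as Constructions~1 and~2 are concerned, a 1-2 drawing of $T_{h-1}$ is completely described by the quadruple $(\omega,\eta,\lambda,\rho)$ of its width, height, left width, and right width. Indeed, in both constructions each of the three sub-drawings $\Gamma^a_{h-1},\Gamma^b_{h-1},\Gamma^c_{h-1}$ is placed, after a rotation by $0\degree$ or $\pm 90\degree$, so that its root lies on one of the two axis-parallel lines through $r(T_h)$ while the sub-drawing itself occupies only one of the two half-planes bounded by the other such line; hence each sub-drawing's reach in each of the four directions away from $r(T_h)$ --- and therefore the width, height, left width, and right width of the composed drawing --- is a function of $\omega,\eta,\lambda,\rho$ of the three sub-drawings alone. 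For the width and height of the composed drawing these functions are precisely the ones of Properties~\ref{pr:construction-1} and~\ref{pr:construction-2}; the analogous (and routine) formulas for its left and right widths are read off directly from the descriptions of Constructions~1 and~2. Write $\mathrm{Comp}_1$ and $\mathrm{Comp}_2$ for the two resulting maps sending a triple of quadruples to a quadruple.

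\textbf{The dynamic program.} For $h\geq 1$ let $S_h$ be the set of quadruples realised by some 1-2 drawing of $T_h$. Then $S_1=\{(1,1,0,0)\}$, and, since by the recursive definition the 1-2 drawings of $T_h$ (for $h\geq 2$) are exactly the drawings obtained by applying Construction~1 or Construction~2 to an arbitrary ordered triple of 1-2 drawings of $T_{h-1}$,
\[
S_h=\{\mathrm{Comp}_1(t_a,t_b,t_c):t_a,t_b,t_c\in S_{h-1}\}\cup\{\mathrm{Comp}_2(t_a,t_b,t_c):t_a,t_b,t_c\in S_{h-1}\}.
\]
The algorithm computes $S_1,S_2,\dots,S_h$ in this order, recording for every quadruple in $S_h$ one construction and one triple in $S_{h-1}$ that produces it; it then outputs $\min\{\omega\cdot\eta:(\omega,\eta,\lambda,\rho)\in S_h\}$ as the minimum area and reconstructs a minimum-area 1-2 drawing by back-tracking through the recorded choices.

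\textbf{Running time, and the main obstacle.} All coordinates stay polynomially bounded: letting $D_h$ be the maximum of the width and the height over all 1-2 drawings of $T_h$, Properties~\ref{pr:construction-1} and~\ref{pr:construction-2} give $D_h\leq 3D_{h-1}+1$ for $h\geq 2$ and $D_1=1$, so $D_h\leq (3^h-1)/2=n$. Hence $|S_h|\leq (n+1)^4$, each $S_h$ is computed from $S_{h-1}$ in $O(n^{12})$ time up to polylogarithmic factors, and there are only $h=\Theta(\log n)$ levels, so the whole algorithm runs in time polynomial in~$n$. (The tables can be made much smaller by retaining in each $S_h$ only the quadruples that are Pareto-optimal for the componentwise order, since $\mathrm{Comp}_1$ and $\mathrm{Comp}_2$ are built from sums and maxima and are thus monotone in every argument, so a dominated quadruple is never needed; this is not required for the polynomial-time claim.) The step that needs the most care --- and where I expect the real work to lie --- is the finite-state description: verifying that no extra information about a 1-2 sub-drawing (for instance its top height and bottom height separately rather than just their sum) is ever consulted by Construction~1 or Construction~2. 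Once that is in place, the rest is a routine bottom-up computation.
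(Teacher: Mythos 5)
Your proposal is correct, but it takes a genuinely different route from the paper's. The paper keeps only width--height pairs as the dynamic-programming state: it first proves a structural lemma (Lemma~\ref{le:separation}) showing that, without loss of generality for area minimization, the left and right subtrees of the root receive congruent drawings (up to a $180\degree$ rotation) and the left width equals the right width; this makes the left/right widths appearing in Properties~\ref{pr:construction-1} and~\ref{pr:construction-2} recoverable from the width alone, and it also justifies combining only Pareto-optimal width--height pairs, of which there are $O(n)$ (Lemma~\ref{le:number-optimal}), so each level is processed by examining only $O(n^2)$ triples (Lemma~\ref{le:computing-optimal}). You instead enlarge the state to the full quadruple (width, height, left width, right width) and enumerate all ordered triples of realizable quadruples, which removes any need for Lemma~\ref{le:separation} or for Pareto arguments: as you correctly identified and justified, Constructions~1 and~2 never consult a sub-drawing's top and bottom heights separately, because $\Gamma^a_{h-1}$ lies entirely below the horizontal line through the new root while the rotated $\Gamma^b_{h-1}$ and $\Gamma^c_{h-1}$ lie entirely to the left and to the right of the vertical line through it; hence the quadruple is a sufficient summary, your recurrence for $S_h$ is exact, and the coordinate bound $D_h\leq 3D_{h-1}+1\leq n$ keeps everything polynomial. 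The trade-off: your argument is more self-contained (it proves the theorem without the structural machinery of Section~\ref{se:12-optimal}), while the paper's extra work buys a much smaller state space and a far faster algorithm (about $O(n^2)$ candidate combinations per level versus your $O(n^{12})$), which is what made the experiments up to $h=20$ feasible; both are polynomial, so both establish the statement. The only detail you leave implicit is writing out the left/right-width analogues of Properties~\ref{pr:construction-1} and~\ref{pr:construction-2}, which is indeed routine (for Construction~1 the composed left and right widths are $\lambda_a+\eta_b$ and $\rho_a+\eta_c$; for Construction~2 they are $\max\{\lambda_a,\eta_b\}$ and $\max\{\rho_a,\eta_c\}$).
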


The proof of Theorem~\ref{th:minimum-area} is based on the following strategy\footnote{We claim that Theorem~\ref{th:minimum-area} can be generalized to ternary trees that are not necessarily complete. However, since our main interest in 1-2 drawings comes from the study of the area requirements of complete ternary trees, we opted for keeping the exposition simple and present the theorem and its proof for complete ternary trees only.}, which resembles the approach proposed in~\cite{fpr-lrd-17} in order to compute minimum-area {\em LR-drawings} of binary trees.

For any value of $h$ we aim at computing all the {\em Pareto-optimal} width-height pairs $(\omega,\eta)$ for the 1-2 drawings of $T_h$; these are the pairs such that: (i) $T_h$ admits a 1-2 drawing with width $\omega$ and height $\eta$; and (ii) there exists no pair $(\omega',\eta')$ such that $T_h$ admits a 1-2 drawing with width $\omega'$ and height $\eta'$, where $\omega'\leq \omega$, $\eta'\leq \eta$, and at least one of these inequalities is strict.

In the following lemma, we bound the number of Pareto-optimal width-height pairs.

\begin{lemma} \label{le:number-optimal}
There are $O(n)$ Pareto-optimal width-height pairs for the 1-2 drawings of a complete ternary tree with $n$ nodes. 
\end{lemma}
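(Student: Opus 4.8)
The plan is to prove this by induction on $h$, bounding the number of Pareto-optimal width-height pairs of $T_h$ in terms of the number of Pareto-optimal pairs of $T_{h-1}$. Let $P(h)$ denote the number of Pareto-optimal width-height pairs for the 1-2 drawings of $T_h$. We want to show that $P(h)$ is bounded by something like $O(n) = O(3^h)$; in fact I would aim for the sharper statement that $\omega_h + \eta_h \in O(3^h)$ for every Pareto-optimal pair $(\omega_h,\eta_h)$, together with the observation that, since width and height are positive integers, the number of Pareto-optimal pairs is at most $\min_{(\omega,\eta)}(\omega+\eta) + 1 \le \max\{\omega_h\} + 1$ (a Pareto-optimal set is an antichain in the grid, and any antichain among pairs with coordinate sum at most $M$ has at most $M+1$ elements — actually more carefully, a Pareto frontier has size at most the number of distinct width values, which is at most $\omega_h^{\max}$).

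First I would pin down $\omega_h^{\max}$, the largest width appearing in any Pareto-optimal pair. By Theorem~\ref{th:complete} there is a 1-2 drawing with both width and height in $O(n^{0.5744})$, so every Pareto-optimal pair has width at most $O(n^{0.5744})$: any pair with larger width is dominated. Since the number of Pareto-optimal pairs is at most the number of distinct achievable widths on the frontier, and each such width is a positive integer bounded by $O(n^{0.5744}) \subseteq O(n)$, we immediately get $P(h) \in O(n)$. This is the whole argument: the Pareto frontier is an antichain, hence has at most one pair per width value, hence its cardinality is at most the maximum width of any non-dominated drawing, which Theorem~\ref{th:complete} bounds by $O(n^{0.5744})$, and a fortiori by $O(n)$.

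To make this rigorous I would: (i) recall that a 1-2 drawing of $T_h$ places nodes at distinct grid points, so width $\omega \ge 1$ and height $\eta \ge 1$ are positive integers; (ii) observe that the set of achievable width-height pairs, being a set of pairs of positive integers, has a well-defined Pareto frontier, and that distinct pairs on the frontier have distinct widths (two Pareto-optimal pairs with equal width would force equal height, hence be identical); (iii) invoke Theorem~\ref{th:complete} to conclude that the minimum over Pareto-optimal pairs of the width — and indeed every Pareto-optimal width, since no Pareto-optimal pair is dominated by the drawing of Theorem~\ref{th:complete} — is at most $O(n^{0.5744})$; more precisely, every Pareto-optimal pair $(\omega,\eta)$ has $\omega \le \omega^\star$ where $(\omega^\star,\eta^\star)$ is the pair from Theorem~\ref{th:complete}, because otherwise $(\omega^\star,\eta^\star)$ with $\eta^\star \le \omega^\star < \omega$ would dominate it unless $\eta < \eta^\star$, but then $(\omega^\star, \eta^\star)$ still has $\omega^\star < \omega$ and we'd need $\eta^\star \le \eta$ too — so I should instead simply say every Pareto-optimal width is at most $\omega^\star$ OR the corresponding height is at most $\eta^\star < \omega^\star$, and in either case... \emph{cleaner:} every Pareto-optimal pair $(\omega,\eta)$ satisfies $\min\{\omega,\eta\}\le\max\{\omega^\star,\eta^\star\}=\omega^\star$, and separately $\omega \le$ (the largest width over all Pareto-optimal pairs) which is itself bounded since the frontier is finite; (iv) conclude $|{\rm frontier}| \le \omega^{\max} \in O(n^{0.5744}) \subseteq O(n)$.

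The main obstacle is purely one of careful bookkeeping: making sure that ``no Pareto-optimal pair is dominated'' is leveraged correctly to transfer the $O(n^{0.5744})$ bound of Theorem~\ref{th:complete} to a bound on \emph{all} Pareto-optimal widths, not just the smallest one. The subtle point is that a Pareto-optimal pair could in principle have very large width and tiny height; to rule this out I would also invoke the height lower bound from Theorem~\ref{th:minimum-side-upper-bound}'s companion (the $\Omega(n^{\log_3\phi})$ height lower bound mentioned for Section~\ref{se:lower-bound}), which forces $\eta = \Omega(n^{0.438})$ for every drawing, so a Pareto-optimal pair with large width $\omega$ still has $\eta \ge \Omega(n^{0.438})$, and comparing against Theorem~\ref{th:complete} forces $\omega \le O(n^{0.5744})$ (else it is dominated). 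Thus the full frontier lives in a bounded box $[1,O(n^{0.5744})]\times[1,O(n^{0.5744})]$, it is an antichain there, and hence has $O(n^{0.5744}) \subseteq O(n)$ elements.
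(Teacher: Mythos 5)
Your first ingredient is fine and matches the paper: the Pareto frontier is an antichain, so it contains at most one pair per integer width value, and hence its size is bounded by the largest width occurring in a Pareto-optimal pair. The gap is in how you bound that largest width. You try to transfer the $O(n^{0.5744})$ bound of Theorem~\ref{th:complete} to \emph{every} Pareto-optimal width via domination, but the pair $(\omega^\star,\eta^\star)$ from that theorem only dominates a pair $(\omega,\eta)$ with $\omega>\omega^\star$ if additionally $\eta\geq\eta^\star$; a Pareto-optimal pair can have huge width precisely because its height is smaller than $\eta^\star$. Your patch via the $\Omega(n^{\log_3\phi})\in\Omega(n^{0.438})$ height lower bound does not close this, since $n^{0.438}\ll n^{0.5744}$: any pair whose height lies between $\Omega(n^{0.438})$ and $\eta^\star$ escapes domination no matter how large its width is. Indeed, the low-height end of the frontier (height $\Theta(n^{0.439})$, width possibly as large as the $O(n^{0.803})$ of Theorem~\ref{th:minimum-side-upper-bound}) is expected to contain exactly such pairs, so your claimed box $[1,O(n^{0.5744})]\times[1,O(n^{0.5744})]$ containing the whole frontier is not established and is most likely false. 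Your fallback remark that the frontier is ``finite'' gives no $O(n)$ bound at all, and your opening idea of proving $\omega+\eta\in O(3^h)$ by induction is never carried out.

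What is missing is a direct argument that no Pareto-optimal pair has width exceeding $n$, which is how the paper proceeds: if a 1-2 drawing has width greater than $n$, then some grid column contains no node, and translating the portion of the drawing to the right of that column one unit to the left yields a 1-2 drawing with the same height and strictly smaller width, so the original pair is dominated. Hence every Pareto-optimal width is at most $n$, and combined with your (correct) one-pair-per-width observation this immediately gives the $O(n)$ bound, with no need for Theorem~\ref{th:complete} or the lower bound of Section~\ref{se:lower-bound}.
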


\begin{proof}
The statement comes from the following two observations. 

First, for any integer value $\omega$ there is at most one Pareto-optimal pair $(\omega,\eta)$. 

Second, any Pareto-optimal pair $(\omega,\eta)$ has $\omega\leq n$, as any 1-2 drawing with width greater than $n$ has a grid column not containing any vertex; then the part of the drawing to the right of such a grid column can be moved one unit to the left, resulting in a 1-2 drawing with the same height and with smaller width. 
\end{proof}

Next, we show how to compute all the Pareto-optimal width-height pairs in polynomial time.

\begin{lemma} \label{le:computing-optimal}
The Pareto-optimal width-height pairs for the 1-2 drawings of $T_h$ can be computed in polynomial (in the number of nodes of $T_h$) time. 
\end{lemma}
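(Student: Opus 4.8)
The plan is to compute the Pareto-optimal width-height pairs for $T_h$ by dynamic programming over $h$, using the already-established facts that (i) every 1-2 drawing of $T_h$ arises from three 1-2 drawings of $T_{h-1}$ combined by Construction~1 or Construction~2 (see Fig.~\ref{fig:frati} and the definition of 1-2 drawing), and (ii) by Lemma~\ref{le:number-optimal} there are only $O(n)$ Pareto-optimal pairs for each height. However, Properties~\ref{pr:construction-1} and~\ref{pr:construction-2} reveal that the resulting width and height depend not only on the widths and heights of the three constituent drawings but also on their left and right widths. Hence the first step is to strengthen the notion of a ``signature'' of a 1-2 drawing: instead of tracking pairs $(\omega,\eta)$, I would track quadruples $(\omega,\eta,\lambda,\rho)$ recording width, height, left width, and right width, and define a Pareto-optimal quadruple in the natural coordinatewise sense. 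The key observation is that such quadruples still live in a polynomially bounded space: by the argument of Lemma~\ref{le:number-optimal}, $\omega\le n$ and $\eta\le n$, and $\lambda,\rho<\omega\le n$, so there are at most $\mathrm{poly}(n)$ feasible quadruples and a fortiori at most $\mathrm{poly}(n)$ Pareto-optimal ones.

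Next I would set up the recurrence. For $h=1$ the only 1-2 drawing is the single point, giving the quadruple $(1,1,0,0)$. For $h\ge 2$, I enumerate all ordered triples $(\Gamma^a_{h-1},\Gamma^b_{h-1},\Gamma^c_{h-1})$ of 1-2 drawings of $T_{h-1}$ — or rather, all ordered triples of Pareto-optimal quadruples, which suffices because any non-Pareto-optimal choice can be replaced by a dominating one without increasing any of the four output parameters (here one checks, using the monotone formulas in Properties~\ref{pr:construction-1} and~\ref{pr:construction-2}, that each output parameter is nondecreasing in each of the twelve input parameters). For each such triple and each of the two constructions, I compute the output quadruple via the formulas in Properties~\ref{pr:construction-1} and~\ref{pr:construction-2} — noting that Construction~2 also needs the left/right widths of $\Gamma^a_{h-1}$, but these are determined by its own quadruple. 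I collect all quadruples so generated and then prune to the Pareto-optimal ones. Finally, projecting the Pareto-optimal quadruples for $T_h$ onto their first two coordinates and re-pruning yields exactly the Pareto-optimal width-height pairs claimed in the lemma.

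For the running time: the number of Pareto-optimal quadruples at each level is at most $\mathrm{poly}(n)$ (at most $n^3$ by the coordinate bounds, though a tighter bound following the Lemma~\ref{le:number-optimal} argument — at most one optimal $(\eta,\lambda,\rho)$ per $\omega$ combined with $\lambda,\rho<\omega$ — is not even needed), so iterating over ordered triples costs $\mathrm{poly}(n)$ per level, the pruning step is polynomial by sorting, and there are $h=O(\log n)$ levels. Hence the whole computation runs in polynomial time. The main obstacle, and the point that requires care, is establishing that it is sound to restrict attention to triples of Pareto-optimal quadruples of $T_{h-1}$: one must verify that the five combining functions appearing in Properties~\ref{pr:construction-1} and~\ref{pr:construction-2} — all built from sums and maxima — are genuinely coordinatewise monotone, so that dominance of inputs propagates to dominance of outputs; once that monotonicity is in hand, an exchange argument shows every Pareto-optimal quadruple of $T_h$ is produced from Pareto-optimal quadruples of $T_{h-1}$, and the dynamic program is correct.
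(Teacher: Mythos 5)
Your proposal is correct, but it takes a genuinely different route from the paper. The paper keeps the dynamic-programming state as width-height \emph{pairs} only: it invokes Lemma~\ref{le:separation} to argue that, without loss of generality, the two lateral subtrees of the root receive congruent drawings (up to a $180\degree$ rotation) and that the tracked 1-2 drawings have equal left and right widths, so each level is built by enumerating only $O(n^2)$ combinations of the form (pair for the lateral subtrees, pair for the bottom subtree, construction $1$ or $2$), and Properties~\ref{pr:construction-1} and~\ref{pr:construction-2} become evaluable from the pairs alone; correctness (restriction to Pareto-optimal sub-pairs and to congruent lateral drawings) is again delegated to Lemma~\ref{le:separation}. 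You instead bypass Lemma~\ref{le:separation} entirely by enriching the state to quadruples $(\omega,\eta,\lambda,\rho)$, enumerating all ordered triples of Pareto-optimal quadruples with both constructions, and proving soundness via coordinatewise monotonicity plus an exchange argument; this is valid (the relevant functions are indeed sums and maxima, hence monotone), and it handles explicitly the dependence on left/right widths that the paper's pair-based proof handles only implicitly through the $\lambda=\rho$ normalization. Two small points to tighten: Properties~\ref{pr:construction-1} and~\ref{pr:construction-2} give only the output width and height, so you must also write down the output left/right widths from the construction descriptions (for Construction~1, $\lambda_{\mathrm{out}}=\eta_b+\lambda_a$ and $\rho_{\mathrm{out}}=\eta_c+\rho_a$; for Construction~2, $\lambda_{\mathrm{out}}=\max\{\lambda_a,\eta_b\}$ and $\rho_{\mathrm{out}}=\max\{\rho_a,\eta_c\}$), giving eight combining functions rather than five, all still monotone; and your per-level cost (cubic in the number of Pareto-optimal quadruples, which you bound only polynomially) is a substantially larger polynomial than the paper's $O(n^2)$ candidates per level, which is irrelevant for the lemma as stated but matters for the paper's subsequent experiments up to $h=20$. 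What your approach buys in exchange is self-containedness and robustness: it does not rely on the structural ``congruent lateral subtrees'' normalization, so it would survive in settings where that symmetrization is unavailable.
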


\begin{proof}
In this proof by {\em pair} we always mean Pareto-optimal width-height pair. Suppose that the pairs for the 1-2 drawings of $T_{h-1}$ have been computed already (note that $(1,1)$ is the only pair for the 1-2 drawings of $T_1$). We compute the pairs for the 1-2 drawings of $T_h$ by considering all the  possible triples $(p_l,p_b,c)$ where $p_l=(\omega_l,\eta_l)$ and $p_b=(\omega_b,\eta_b)$ are pairs for the 1-2 drawings of $T_{h-1}$ and $c$ is either $1$ or $2$. By Lemma~\ref{le:number-optimal} there are $O(n)$ pairs for the 1-2 drawings of $T_{h-1}$, hence there are $O(n^2)$ such triples. For each triple $(p_l,p_b,c)$ we consider the 1-2 drawing $\Gamma_h$ defined as follows: 
\begin{itemize}
	\item $\Gamma^a_{h-1}$ is a 1-2 drawing with width $\omega_b$ and height $\eta_b$; 
	\item $\Gamma^b_{h-1}$ is a 1-2 drawing with width $\omega_l$ and height $\eta_l$, clockwise rotated by $90\degree$; note that, after the rotation, $\Gamma^b_{h-1}$ has width $\eta_l$ and height $\omega_l$; 
	\item $\Gamma^c_{h-1}$ is a 1-2 drawing with width $\omega_l$ and height $\eta_l$, counterclockwise rotated by $90\degree$; note that, after the rotation, $\Gamma^c_{h-1}$ has width $\eta_l$ and height $\omega_l$;
	\item $\Gamma^a_{h-1}$, $\Gamma^b_{h-1}$, and $\Gamma^c_{h-1}$ are arranged as in Construction~$c$. 
\end{itemize}

The width and height of $\Gamma_h$ can be computed in $O(1)$ time by Properties~\ref{pr:construction-1} and~\ref{pr:construction-2}.

Out of the $O(n^2)$ 1-2 drawings of $T_h$ constructed as above we only keep the $O(n)$ drawings corresponding to Pareto-optimal width-height pairs -- it comes again from Lemma~\ref{le:number-optimal} that there are this many Pareto-optimal width-height pairs. This can be accomplished in polynomial time by ordering the $O(n^2)$ width-height pairs by increasing width and, secondarily, by increasing height, and by removing every width-height pair that is preceded by a pair with smaller or equal height. 

The correctness of the described algorithm follows by Lemma~\ref{le:separation}. In particular, the algorithm uses, in every constructed drawing of $T_h$, two drawings for the left and right subtrees of $r(T_h)$ that are congruent, up to a rotation of $180\degree$, which can be done without loss of generality by Lemma~\ref{le:separation}. Further, the algorithm constructs the Pareto-optimal pairs for the 1-2 drawings of $T_h$ starting from the Pareto-optimal pairs for the 1-2 drawings of $T_{h-1}$. This is also not a loss of generality, as any 1-2 drawing of $T_{h-1}$ which does not correspond to a Pareto-optimal pair can be replaced by a (possibly rotated) 1-2 drawing of $T_{h-1}$ which corresponds to a Pareto-optimal pair, without increasing the width and height of the drawing of $T_h$; the existence of such a  drawing again follows by Lemma~\ref{le:separation}. 
\end{proof}

Lemmata~\ref{le:number-optimal} and~\ref{le:computing-optimal} imply Theorem~\ref{th:minimum-area}, as the minimum area for a 1-2 drawing of $T_h$ is equal to $\min\{\omega \cdot \eta \}$, where the minimum is taken over all the Pareto-optimal width-height pairs $(\omega,\eta)$ for the 1-2 drawings of $T_h$.

We run a mono-thread C implementation of the algorithm that computes the Pareto-optimal width-height pairs for the 1-2 drawings of $T_h$ described in the proof of Lemma~\ref{le:computing-optimal} on a machine with two $4$-core $3.16$GHz Intel(R) Xeon(R) CPU X$5460$ processors, with $48$GB of RAM, running Ubuntu $14.04.2$ LTS. We could compute the Pareto-optimal width-height pairs $(\omega,\eta)$ and the minimum area for the 1-2 drawings of $T_h$ with $h$ up to $20$. The computation of the pairs for $h=20$ took roughly $5$ days. The table below shows the value of $h$, the corresponding value of $n$, which is $(3^h-1)/2$, and the minimum area required by any 1-2 drawing (and by Theorem~\ref{th:separation} by any planar straight-line orthogonal drawing satisfying the subtree separation property) of $T_h$.  

\definecolor{Gray}{gray}{0.9}

{\centering
\begin{table}[htb]
\centering
	\scriptsize
	\begin{tabular} {|c|c|c|c|c|c|c|c|c|c|c|} 
		\hline
		\rowcolor{Gray}{ $h$} & 1  & 2  & 3  & 4  & 5  & 6  & 7  & 8  & 9  & 10 \\
		\hline
		{ $n$} & 1 & 4 & 13 & 40 & 121 & 364 & \num{1093} & \num{3280} & \num{9841} & \num{29524}\\
		\hline
		{Area} & 1 & 6 & 25 & 99 & 342 & \num{1184} & \num{4030} & \num{13320} & \num{44457} & \num{144690} \\
		\hline
		\hline
		\rowcolor{Gray}{ $h$} & 11  & 12  & 13  & 14 &15 & 16 & 17  & 18  & 19  & 20\\
		\hline
		{ $n$} & \num{88573} & \num{265720} & \num{797161} & \num{2391484} & \num{7174453} & \num{21523360} & \num{64570081} & \num{193710244} & \num{581130733} & \num{1743392200}\\
		\hline
		{Area} & \num{469221} & \num{1520189} & \num{4840478} & \num{15550542} & \num{49461933} & \num{157388427} & \num{498895215} & \num{1580110511} & \num{4990796080} & \num{15765654805} \\
		\hline		\end{tabular}
	\label{ta:evaluation}
\end{table}
}

By means of the Mathematica software~\cite{mat} we searched for the function of the form $a\cdot n^b + c$ that better fits the values of the table above, according to the least squares optimization method. The optimal function we got is $3.3262 \cdot  x^{1.047}-\num{181209.1337}$. While the large absolute value of the additive constant suggests the need for a lower order term or for a different optimization method, the experimentation also seems to indicate that planar straight-line orthogonal drawings with the subtree separation property cannot be constructed within almost-linear area. We hence state the following conjecture.

\begin{conjecture}	
There exists a constant $\varepsilon>0$ such that $n$-node complete ternary trees require $\Omega(n^{1+\epsilon})$ area in any planar straight-line orthogonal drawing satisfying the subtree separation property.
\end{conjecture}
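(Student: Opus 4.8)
The plan is to first reduce the statement to a lower bound on the area of \emph{1-2 drawings}. By Theorem~\ref{th:separation}, every planar straight-line orthogonal drawing of $T_h$ with the subtree separation property has area at least that of a minimum-area 1-2 drawing of $T_h$; so it suffices to exhibit a constant $\varepsilon>0$ such that every 1-2 drawing $\Gamma_h$ of $T_h$ has area $\Omega(n^{1+\varepsilon})$, where $n=(3^h-1)/2$.

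The backbone of the argument would be a \emph{dead-space accounting}. For $h\geq 3$, a 1-2 drawing $\Gamma_h$ of $T_h$ is obtained via Construction~1 or~2 from three 1-2 drawings $\Gamma^a_{h-1},\Gamma^b_{h-1},\Gamma^c_{h-1}$ of $T_{h-1}$, where by Lemma~\ref{le:separation} we may assume $\Gamma^b_{h-1}$ and $\Gamma^c_{h-1}$ are congruent up to a rotation by $180\degree$ (hence have equal area). Since a 1-2 drawing satisfies the subtree separation property, for every node $v$ the bounding box of the drawing of $T_v$ contains the pairwise interior-disjoint bounding boxes of the drawings of the subtrees of $v$; writing $R_v$ for the part of the box of $T_v$ not covered by the boxes of the subtrees of $v$, the regions $R_v$ (over all internal nodes $v$ of $T_h$), together with the $3^{h-1}$ unit boxes of the leaves, tile the box of $\Gamma_h$. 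Hence $\mathrm{area}(\Gamma_h)=\Theta(n)+\sum_{v}\mathrm{area}(R_v)$, and it is enough to prove $\sum_v \mathrm{area}(R_v)=\Omega(n^{1+\varepsilon})$. The heart of the proof would then be a \emph{local} statement: there is a constant $\delta>0$ such that for every sufficiently large $j$ and every 1-2 drawing $\Delta$ of $T_j$ obtained as above, $\mathrm{area}(R_{r(T_j)})\geq \delta\cdot \mathrm{area}(\Delta)$. Granting this, one concludes by induction on $h$: if $\Gamma_h$ has sub-drawings $\Gamma^a_{h-1},\Gamma^b_{h-1},\Gamma^c_{h-1}$ and $B(h-1)$ denotes the minimum area of a 1-2 drawing of $T_{h-1}$, then $\mathrm{area}(\Gamma_h)=\mathrm{area}(R_{r(T_h)})+\sum_{i}\mathrm{area}(\Gamma^i_{h-1})\geq \delta\cdot\mathrm{area}(\Gamma_h)+3B(h-1)$, hence $B(h)\geq \tfrac{3}{1-\delta}B(h-1)$ and $B(h)=\Omega((3/(1-\delta))^h)=\Omega(n^{1+\log_3(1/(1-\delta))})$, which is exactly the conjectured bound with $\varepsilon=\log_3\tfrac{1}{1-\delta}$.

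Proving the local statement is the main obstacle, and the reason this remains a conjecture. A direct case analysis of Constructions~1 and~2 shows that $\mathrm{area}(R_{r(T_j)})$ can be \emph{forced} to be small when the three sub-drawings have compatible aspect ratios: for Construction~1, with a bottom sub-drawing of size $\omega_a\times\eta_a$ and a side sub-drawing of size $\omega_s\times\eta_s$ (and right width $\rho_s$ of the latter before rotation), the dead space at the root equals $\omega_a(\omega_s-\eta_a)$ in the regime $\eta_a\leq\rho_s$, and this degenerates when $\eta_a\approx\omega_s$ while $\omega_a\ll\eta_s$. To rule out such conspiracies one must understand which $(\mathrm{width},\mathrm{height})$-pairs — and, more finely, which tuples $(\mathrm{width},\mathrm{height},\mathrm{left\ width},\mathrm{right\ width},\mathrm{top\ height},\mathrm{bottom\ height})$ — are attainable by 1-2 drawings of $T_{h-1}$ of near-minimum area, and show that this attainable set cannot contain pairs of drawings whose shapes cancel the dead space at \emph{every} level.

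Concretely, I would try to strengthen the inductive hypothesis to a quantitative ``balance'' property of near-optimal 1-2 drawings — e.g.\ that their left and right widths, and their top and bottom heights, are each a constant fraction of the width, resp.\ of the height, and that the two halves into which a combination step splits the drawing are comparable in size — and then verify that any combination of such balanced sub-drawings necessarily loses a constant fraction of the area, so that balanced drawings cannot be near-linear; combined with the (hoped-for) fact that near-linear-area 1-2 drawings must themselves be balanced at every level, this would close the loop. An alternative, possibly more robust, route is to amortise the accounting over two consecutive recursion levels — bounding the dead space accumulated when $T_j$ is built from the nine 1-2 drawings of $T_{j-2}$ — exploiting that Constructions~1 and~2 cannot both be ``tight'' at once. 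Making either strategy rigorous, and in particular controlling the worst case over the optimizer's free choice among the two constructions and among all Pareto-optimal sub-drawings of $T_{h-1}$, is the crux; note also that any such $\varepsilon$ must satisfy $\varepsilon<0.149$ by Theorem~\ref{th:complete}, so the quantitative balance that one can hope to prove is necessarily mild.
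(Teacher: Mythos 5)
The statement you are addressing is a conjecture: the paper offers no proof of it, only supporting evidence (a polynomial-time algorithm for minimum-area 1-2 drawings via Theorem~\ref{th:minimum-area}, experiments up to $h=20$, and a least-squares fit with exponent $\approx 1.047$). Your proposal likewise is not a proof, and you say so yourself; so the honest assessment is that it contains a genuine gap rather than a complete argument. The parts that do work are the reduction and the accounting identity: by Theorem~\ref{th:separation} it indeed suffices to lower-bound the area of 1-2 drawings, and since 1-2 drawings satisfy subtree separation, the bounding box of $\Gamma_h$ does decompose into the boxes of the subtrees' drawings plus the ``dead'' region at each internal node, so $\mathrm{area}(\Gamma_h)=\Theta(n)+\sum_v\mathrm{area}(R_v)$ is a sound starting point, and the recurrence $B(h)\geq \frac{3}{1-\delta}B(h-1)$ would give exactly the conjectured bound with $\varepsilon=\log_3\frac{1}{1-\delta}$.

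The gap is the local statement itself, and it is not a technicality: as you formulate it (``for every 1-2 drawing of $T_j$ the dead space at the root is at least a $\delta$ fraction of the total area''), it is almost certainly false, since your own case analysis of Constructions~1 and~2 exhibits regimes where the root-level dead space degenerates when the sub-drawings' aspect ratios and left/right widths conspire; the whole difficulty of the conjecture is precisely to show that such conspiracies cannot happen at \emph{every} level simultaneously for drawings of near-minimal area. Your proposed repairs --- a quantitative ``balance'' invariant for near-optimal 1-2 drawings, or amortization over two consecutive levels of the recursion --- are not formulated precisely, and no argument is given that balanced sub-drawings force a constant-fraction loss or that near-linear drawings must be balanced; moreover, the experimental exponent $\approx 1.047$ (and the $O(n^{1.149})$ upper bound of Theorem~\ref{th:complete}) shows that any provable per-level loss $\delta$ must be very small, so the adversary's freedom in choosing between the two constructions and among all Pareto-optimal sub-drawings cannot be dismissed by a crude constant-factor estimate. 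In short, the skeleton is reasonable and consistent with how one might eventually attack the conjecture, but the crucial lemma is missing, so the conjecture remains exactly that --- both in the paper and in your proposal.
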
	

\section{Lower Bound for the Minimum Side Length} \label{se:lower-bound}

In this section we show that the minimum side length of any planar straight-line orthogonal drawing of a complete ternary tree with $n$ nodes is in $\Omega(n^{\log_3 \phi})\in\Omega(n^{0.438})$, where $\phi = (1+\sqrt 5)/2$ is the golden ratio. We remark that this lower bound holds true even for drawings that do not satisfy the subtree separation property. Thus, with respect to the minimum side length of a planar straight-line orthogonal drawing, binary and ternary trees are in sharp contrast. Namely, \emph{any} $n$-node binary tree admits a planar straight-line orthogonal drawing in which the minimum side length is in $O(\log n)$~\cite{cdp-no-92}. We have the following.

\begin{theorem} \label{th:minimum-side}
The minimum side length of any planar straight-line orthogonal drawing of an $n$-node complete ternary tree is in $\Omega(n^{\log_3 ((1+\sqrt 5)/2)})\in\Omega(n^{0.438})$.
\end{theorem}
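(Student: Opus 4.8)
We sketch the approach we would take.

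\smallskip\noindent\textbf{Reduction to a width bound.} The plan is to reduce the statement to a lower bound on the \emph{width} of a drawing alone. A rotation by $90\degree$ about a grid point turns a drawing of $T_h$ of width $w$ and height $\eta$ into a valid drawing of $T_h$ of width $\eta$ and height $w$; hence the minimum width and the minimum height over all drawings of $T_h$ coincide, and it suffices to prove that \emph{every} drawing of $T_h$ has width in $\Omega(\phi^h)$ (the same bound for the height then follows by applying this to the $90\degree$-rotated drawing). Since $n=|T_h|=(3^h-1)/2$ we have $h=\log_3(2n+1)$ and $\phi^h=(2n+1)^{\log_3\phi}\in\Theta(n^{\log_3\phi})$, so this yields the claimed bound for both side lengths. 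Let $f(h)$ be the minimum width of a drawing of $T_h$; the goal becomes to prove $f(h)\ge f(h-1)+f(h-2)$ for every $h$ larger than a small constant $h_0$. Since $f(1)=1$ and $f(2)=2$, a straightforward induction — identical in spirit to the one bounding $\eta^1_h$ in the proof of Theorem~\ref{th:minimum-side-upper-bound}, using $\phi^2=\phi+1$ — then gives $f(h)\ge k\cdot\phi^h$ for a suitable constant $k>0$. The finitely many base cases $h\le h_0$ are checked directly; here one uses, for instance, that $T_h$ with $h\ge 3$ cannot be drawn with a constant number of columns, because every internal non-root node has degree $4$ and hence needs columns on both sides of it.

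\smallskip\noindent\textbf{A geometric separation statement.} The recursive inequality will follow from the following claim: in any drawing $\Gamma$ of $T_h$ there exist a vertical line $\ell$, a subtree $X$ of $T_h$ rooted at a child of $r(T_h)$ (so $X\cong T_{h-1}$), and a subtree $Y$ of $T_h$ rooted at a grandchild of $r(T_h)$ and vertex-disjoint from $X$ (so $Y\cong T_{h-2}$), such that the drawing of $X$ lies entirely on one side of $\ell$ while the drawing of $Y$ lies entirely on the other side. Granting this, the sets of columns intersected by the drawings of $X$ and of $Y$ are disjoint, so the width of $\Gamma$ is at least the width of the drawing of $X$ plus the width of the drawing of $Y$, which by definition of $f$ is at least $f(h-1)+f(h-2)$. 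This is exactly the inequality we need.

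\smallskip\noindent\textbf{Proving the separation statement.} We would argue by case analysis on how the three edges incident to $r(T_h)$ are arranged, i.e., on which three of the four axis directions they use; up to reflections and a rotation by $180\degree$ (all of which preserve the width) there are two cases: (i) the edges leave $r(T_h)$ leftward, rightward, and downward; and (ii) they leave $r(T_h)$ leftward, upward, and downward. In each case we exploit that a child $c$ of $r(T_h)$ of height at least $2$ has, in turn, its three edges to \emph{its} children occupying the three axis directions other than the one pointing to its parent; choosing $c$ appropriately, this produces a grandchild $c'$ of $r(T_h)$ whose edge from $c$ points horizontally away from $r(T_h)$. A planarity argument — examining the cyclic order of the edges around $r(T_h)$, around $c$, and around $c'$, and using that the structure is a \emph{tree} — would then confine the drawing of $Y=T_{c'}$ to one side of a vertical line placed just past $c$ on the side opposite to $r(T_h)$, while a symmetric argument places the drawing of a suitable depth-$1$ subtree $X$ on the other side of that line.

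\smallskip\noindent\textbf{Main obstacle.} The hard part is precisely this last planarity argument: since we do \emph{not} assume the subtree separation property, the drawing of a subtree need not be confined to a halfplane bounded by a line through its attachment point, and in principle it could wind around $r(T_h)$. Ruling out the harmful windings — or, when they occur, rerouting the choice of $X$, $Y$, and $\ell$ so that the separation still holds — is where the bulk of the work lies; we expect it to rely on the acyclicity of $T_h$ together with the fact that every internal node other than $r(T_h)$ has degree exactly $4$ and therefore uses all four axis directions, a rigidity that severely restricts how a foreign subtree drawing can nest around such a node. Once the separation statement is established, assembling the recurrence, discharging the base cases, and invoking the $90\degree$-rotation symmetry completes the proof.
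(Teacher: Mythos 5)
There is a genuine gap, and it sits exactly where you flag it: the ``geometric separation statement'' is not proved, and without the subtree separation property it is not even clearly true. Planarity alone does not force the drawing of a child subtree $X\cong T_{h-1}$ and of a disjoint grandchild subtree $Y\cong T_{h-2}$ to occupy disjoint sets of columns: if the root's edges point left, right, and down, the left subtree may exit leftward, climb into the free halfplane above the root, and overhang arbitrarily far to the right, sharing columns with the right subtree and with every grandchild subtree there; you would have to show that \emph{some} choice of $X$, $Y$ and the vertical line $\ell$ always survives such windings, and no argument is given (the cyclic-order/rigidity heuristics you sketch are precisely ``the bulk of the work''). Consequently the recurrence $f(h)\ge f(h-1)+f(h-2)$ for $f=$ minimum width is unsupported, and the whole proof rests on it.

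The paper's proof deliberately avoids controlling whole subtree drawings. It uses the degree-$4$ rigidity only to extract forced straight segments: in any drawing of $T_h$ there is a root-to-leaf path drawn on a single line (the \emph{leg}) and two root-to-leaf paths drawn on one common line (the \emph{arms}), because each internal node must use all four axis directions. It then defines $f(h)$ as the minimum, over all drawings, of the lengths of the leg and the two arms, and proves $f(h)\ge f(h-1)+f(h-2)$ by a short case analysis on whether the horizontal line carrying the arms of the bottom subtree passes below or through the right arm of the left subtree; in each case the leg (or an arm) of $\Gamma$ contains a leg/arm of a child's drawing plus an arm of a grandchild's drawing, measured along one line, so no column- or row-disjointness of subtree drawings is ever needed. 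Finally, width and height of $\Gamma$ are each at least one of these three lengths, giving the $\Omega(\phi^h)=\Omega(n^{\log_3\phi})$ bound on both sides (your $90\degree$-rotation reduction is thus also unnecessary there). If you want to salvage your route, you would either have to prove the separation lemma — which looks at least as hard as the theorem — or switch, as the paper does, to recursing on lengths of segments that the tree structure forces to be drawn collinearly.
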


\begin{proof}
Let $\Gamma$ be any planar straight-line orthogonal drawing of the complete ternary tree $T_{h}$ with height $h$, as in Fig.~\ref{fig:lower-bound}(a). One of the children of $r(T_h)$, call it $v_1$, is such that no other child of $r(T_h)$ is drawn on the line $\ell$ through $r(T_h)$ and $v_1$. Moreover, for $i=1,2,\dots,h-2$, the node $v_i$ has exactly one child $v_{i+1}$ drawn on $\ell$. Hence, in $\Gamma$ there is a root-to-leaf path with $h$ nodes that is drawn all on the same (horizontal or vertical) line $\ell$, and that is such that no other child of $r(T_h)$ is on $\ell$. We call \emph{leg} of $\Gamma$ such a path. Analogously, in $\Gamma$ there are two root-to-leaf paths with $h$ nodes each that are both drawn on the same (horizontal or vertical) line. We call \emph{left arm} and \emph{right arm} of $\Gamma$ such paths, so that the left arm, the leg, and the right arm of $\Gamma$ occur in this counterclockwise order around $r(T_h)$. Denote by $\gamma(\Gamma)$, $\lambda(\Gamma)$, and $\rho(\Gamma)$ the length of the leg of $\Gamma$, the length of the left arm of $\Gamma$, and the length of the right arm of $\Gamma$, respectively. 

	\begin{figure}[htb]
		\begin{center}
			\begin{tabular}{c c c}
				\mbox{\includegraphics[scale = 0.9]{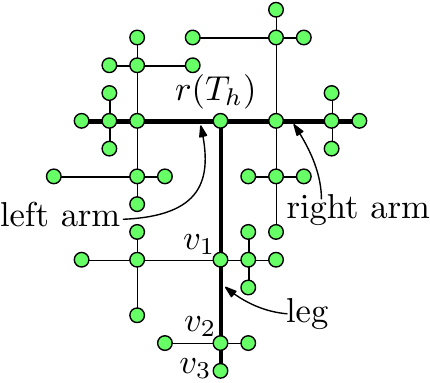}} & \hspace{4mm}
				\mbox{\includegraphics[scale = 0.9]{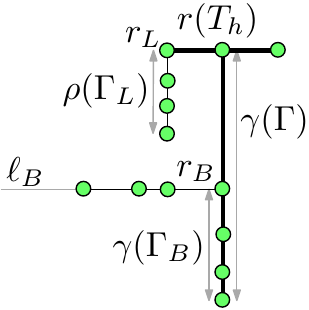}} & \hspace{4mm}
				\mbox{\includegraphics[scale = 0.9]{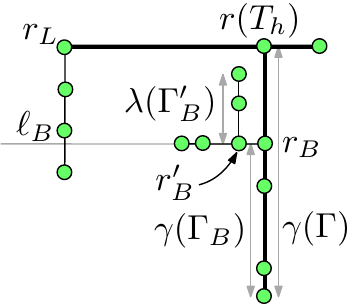}} \\
				(a) & \hspace{4mm}(b) & \hspace{4mm}(c)	
			\end{tabular}
			\caption{(a) Leg and arms (shown by thick lines) in a planar straight-line orthogonal drawing of a complete ternary tree $T_h$. (b) The length of the leg of $\Gamma$ is greater than or equal to $f(h-1)+f(h-2)$ if $\ell_B$ lies below the right arm of $\Gamma_L$. (c) The length of the leg of $\Gamma$ is greater than or equal to $f(h-1)+f(h-2)$ if $\ell_B$ intersects the right arm of $\Gamma_L$.} 
			\label{fig:lower-bound}
		\end{center}
	\end{figure}

We now define the function $f(h)=\min_{\Gamma}\{\gamma(\Gamma),\lambda(\Gamma),\rho(\Gamma)\}$, where the minimum is taken over all the planar straight-line orthogonal drawings $\Gamma$ of $T_{h}$. Note that $f(h)\geq h$, as every planar straight-line orthogonal drawing of $T_h$ has the leg, the left arm, and the right arm of length at least $h$. 

We claim that, for any $h\geq 3$, we have $f(h)\geq f(h-1)+f(h-2)$. The claim implies the theorem; this comes from the following two observations.

First, consider any planar straight-line orthogonal drawing $\Gamma$ of $T_{h}$. If $r(T_h)$ has two children on the same horizontal line in $\Gamma$, then the width of $\Gamma$ is larger than or equal to $\lambda(\Gamma)$ and the height of $\Gamma$ is larger than or equal to $\gamma(\Gamma)$. Otherwise, $r(T_h)$ has two children on the same vertical line in $\Gamma$, and then the height of $\Gamma$ is larger than or equal to $\lambda(\Gamma)$ and the width of $\Gamma$ is larger than or equal to $\gamma(\Gamma)$. 

Second, since $f(h)\geq f(h-1)+f(h-2)$, we have that $f(h)$ grows asymptotically as the terms of the Fibonacci sequence; it is known that the ratio between two consecutive terms of the Fibonacci sequence tends to the golden ratio $\phi=(1+\sqrt 5)/2\approx 1.618$. Hence $f(h) \in \Omega (\phi^h) \in \Omega (n^{\log_3 \phi})\in \Omega(n^{0.438})$. Formally, it can be proved by induction that $f(h)\geq k\cdot \phi^h$, for a constant $k>0$ that is sufficiently small so that $f(h)\geq k\cdot \phi^h$ is verified for $h=1$ and $h=2$. By induction, $f(h-1)+f(h-2)\geq k\cdot \phi^{h-1} + k\cdot \phi^{h-2}$. Hence, we need to prove that  $k\cdot\phi^{h-1} + k\cdot\phi^{h-2}\geq k\cdot\phi^{h}$, that is, $\phi^2-\phi-1\leq 0$. Since $\phi=(1+\sqrt 5)/2$ is one of the solutions to the equation $\phi^2-\phi-1=0$, it follows that $\phi^2-\phi-1\leq 0$, and hence the induction is completed and $f(h)\geq k\cdot \phi^h \in \Omega(\phi^h)$. 

It remains to prove the claim: for any $h\geq 3$, we have $f(h)\geq f(h-1)+f(h-2)$. Consider any planar straight-line orthogonal drawing $\Gamma$ of $T_{h}$. Assume that the children of $r(T_h)$ are to the left, below, and to the right of $r(T_h)$ in $\Gamma$; the other three cases can be discussed symmetrically. Denote by $\Gamma_L$, $\Gamma_B$, and $\Gamma_R$ the drawings in $\Gamma$ of the subtrees of $r(T_h)$ that are rooted at the children $r_L$, $r_B$, and $r_R$ that are to the left, below, and to the right of $r(T_h)$, respectively. Then the arms of $\Gamma_L$ and $\Gamma_R$ lie on two vertical lines, while the arms of $\Gamma_B$ lie on a horizontal line $\ell_B$. 

We prove that $\gamma(\Gamma)\geq f(h-1)+f(h-2)$. We distinguish two cases. In the first case, shown in Fig.~\ref{fig:lower-bound}(b), $\ell_B$ lies below the right arm of $\Gamma_L$. Then $\gamma(\Gamma)\geq \gamma(\Gamma_B)+\rho(\Gamma_L)$. Since $\gamma(\Gamma_B)\geq f(h-1)$ and $\rho(\Gamma_L)\geq f(h-1)$, and since $f(h-1)\geq f(h-2)$, we have $\gamma(\Gamma)\geq  f(h-1)+f(h-2)$. In the second case, shown in Fig.~\ref{fig:lower-bound}(c), $\ell_B$ intersects the right arm of $\Gamma_L$. Let $r'_B$ be the child of $r_B$ that is to the left of $r_B$ in $\Gamma$ and let $\Gamma'_B$ be the drawing in $\Gamma$ of the subtree of $r_B$ that is rooted at $r'_B$. Then $\gamma(\Gamma)\geq \gamma(\Gamma_B)+\lambda(\Gamma'_B)$ (observe that the sum $\gamma(\Gamma_B)+\lambda(\Gamma'_B)$ counts twice the grid line $\ell_B$, however does not count the horizontal line through $r(T_h)$). Since $\gamma(\Gamma_B)\geq f(h-1)$ and $\lambda(\Gamma'_B)\geq f(h-2)$, we have $\gamma(\Gamma) \geq f(h-1)+f(h-2)$.

Next, we prove that $\lambda(\Gamma)\geq f(h-1)+f(h-2)$. We again distinguish two cases. If $\ell_B$ intersects the right arm of $\Gamma_L$, then $\lambda(\Gamma)\geq \lambda(\Gamma_B)+\gamma(\Gamma_L)$. Since $\lambda(\Gamma_B)\geq f(h-1)$ and $\gamma(\Gamma_L)\geq f(h-1)$, and since $f(h-1)\geq f(h-2)$, we have $\lambda(\Gamma)\geq f(h-1)+f(h-2)$. Otherwise, $\ell_B$ lies below the right arm of $\Gamma_L$. Let $r'_{L}$ be the child of $r_L$ that is below $r_L$ in $\Gamma$ and let $\Gamma'_L$ be the drawing in $\Gamma$ of the subtree of $r_L$ that is rooted at $r'_L$. Then $\lambda(\Gamma)\geq \gamma(\Gamma_L)+\rho(\Gamma'_L)$. Since $\gamma(\Gamma_L)\geq f(h-1)$ and $\rho(\Gamma'_L)\geq f(h-2)$, we have $\lambda(\Gamma) \geq f(h-1)+f(h-2)$.

Finally, the proof that $\rho(\Gamma)\geq f(h-1)+f(h-2)$ is symmetric to the proof that $\lambda(\Gamma)\geq f(h-1)+f(h-2)$.
\end{proof}

\section{Conclusions and Open Problems} \label{se:conclusions}

In this paper we studied the area requirements of planar straight-line orthogonal drawings of ternary trees. Several problems related to this topic remain open.

\begin{open} \label{op1}
Let $f(n)$ be the smallest function such that every $n$-node ternary tree admits a planar straight-line orthogonal drawing in area $f(n)$. What is the (asymptotic) value of $f(n)$?
\end{open}

We proved the first sub-quadratic area upper bound for $f(n)$; namely, our bound is \generalArea. For complete ternary trees better area bounds can be achieved, however the following is also open. 

\begin{open} \label{op2}
	Let $g(n)$ be the smallest function such that an $n$-node complete ternary tree admits a planar straight-line orthogonal drawing in area $g(n)$. What is the (asymptotic) value of $g(n)$?
\end{open}

We suspect that $f(n)$ and $g(n)$ are super-linear functions of $n$. In particular, motivated by an extensive experimental analysis, we conjectured in this paper that $g(n)$ (and hence $f(n)$) is in $\Omega(n^{1+\epsilon})$, for some suitable constant $\epsilon$, if one restricts the attention to drawings that satisfy the subtree separation property. 

It is interesting that, differently from binary trees, $O(n \textrm{ polylog} (n))$ area upper bounds cannot be achieved by ``squeezing'' the drawings in one direction only. Indeed, we proved that an $n$-node complete ternary tree requires polynomial width {\em and} height in any planar straight-line orthogonal drawing; our lower bound for the minimum side length of a planar straight-line orthogonal drawing of an $n$-node complete ternary tree is $\Omega(n^{\log_3 \phi})\in \Omega(n^{0.438})$, where $\phi$ is the golden ratio. As a consequence, the following problem seems central to the study of the area requirements of planar straight-line orthogonal drawings of ternary trees.

\begin{open} \label{op3}
	Does every $n$-node ternary tree admit a planar straight-line orthogonal drawing in which both the width and the height are in $o(n)$?
\end{open}

Finally, it is not clear to us whether planar straight-line orthogonal drawings in which one side has length matching our $\Omega(n^{\log_3 \phi})$ lower bound can be achieved for all $n$-node ternary trees. While we proved that this can be done for $n$-node complete ternary trees, for general $n$-node ternary trees our best bound is only \generalHeight. Thus, we ask the following.

\begin{open} \label{op4}
	Does every $n$-node ternary tree admit a planar straight-line orthogonal drawing in which one side has length in $O(n^{\log_3 \phi})$, where $\phi$ is the golden ratio?
\end{open}

Clearly, a positive answer to Open Problem~\ref{op4} would imply an improved upper bound for the function $f(n)$ from Open Problem~\ref{op1}. 

\bibliographystyle{splncs03} 
\bibliography{bibliography}

\end{document}